\newcommand\T{\rule{0pt}{2.6ex}}       
\def\Pe{\mathrm {Pe}}
 \DeclareMathOperator*{\argmax}{arg\,max}
 \DeclareMathOperator*{\argmin}{arg\,min}
\newtheorem{theorem}{Theorem}
\newtheorem{lemma}{Lemma}
\newtheorem{proposition}{Proposition}
\newtheorem{corollary}{Corollary}
\theoremstyle{definition}
\newtheorem{definition} {Definition}
\newtheorem{remarks}{Remark}
\newtheorem{example}{Example}
\newfont{\boldlarge}{msbm10 scaled 1100}
\newcommand{\ignore}[1]{}
\def\expe{\mathbb{E}}   
\def\argmin{\mathop{\rm argmin}}
\def\argmax{\mathop{\rm argmax}}
\def\P{\mathsf{P}}
\def\ber{\mathsf{Bern}}
\newcommand{\kl}[2]{D\left(  #1 \left\| #2 \right. \right)}
\def\mbf{\mathbf}
\def\mbs{\boldsymbol}
\renewcommand{\qed}{\nobreak \ifvmode \relax \else
      \ifdim\lastskip<1.5em \hskip-\lastskip
      \hskip1.5em plus0em minus0.5em \fi \nobreak
      \vrule height0.2em width0.5em depth0.4em\fi}
\begin{document}

\sloppy

\title{Improved Target Acquisition Rates with Feedback Codes}

\author{Anusha~Lalitha, 
        Nancy~Ronquillo, 
        and~Tara~Javidi,~\IEEEmembership{Senior~Member,~IEEE}
\thanks{Preliminary versions of this work were presented at the 50th Asilomar Conference on Signals, Systems, and Computers, and at 2017 International Symposium on Information Theory.}
\thanks{A. Lalitha N. Ronquillo, and T. Javidi are with the Department of Electrical and Computer Engineering, University of California San Diego, La Jolla, CA 92093, USA. (e-mail: alalitha@ucsd.edu; nronquil@ucsd,edu; tjavidi@ucsd.edu). } 
}

\maketitle
\begin{abstract}
This paper considers the problem of acquiring an unknown target location (among a finite number of locations) via a sequence of measurements, where each measurement consists of simultaneously probing a group of locations. The resulting observation consists of a sum of an indicator of the target's presence in the probed region, and a zero mean Gaussian noise term whose variance is a function of the measurement vector. An equivalence between the target acquisition problem and channel coding over a binary input additive white Gaussian noise (BAWGN) channel with state and feedback is established. Utilizing this information theoretic perspective, a two-stage adaptive target search strategy based on the sorted Posterior Matching channel coding strategy is proposed. Furthermore, using information theoretic converses, the fundamental limits on the target acquisition rate for adaptive and non-adaptive strategies are characterized. As a corollary to the non-asymptotic upper bound of the expected number of measurements under the proposed two-stage strategy, and to non-asymptotic lower bound of the expected number of measurements for optimal non-adaptive search strategy, a lower bound on the adaptivity gain is obtained. The adaptivity gain is further investigated in different asymptotic regimes of interest.
\end{abstract}


\section{Introduction}
\label{sec:intro}
Consider a single target acquisition over a search region of width $B$ and resolution up to width $\delta$. Mathematically, this is the problem of estimating a unit vector $\mbf{W} \in \{0,1\}^{\frac{B}{\delta}}$ via a sequence of noisy linear measurements 
\begin{equation}
\label{eq:linear1}
Y_n = \langle {\mbf{S}_n},\mbf{W}  + \mbs{\Xi}_n\rangle, \quad n = 1,2, \ldots, \tau,
\end{equation}
where a binary measurement vector $\mbf{S}_n \in \{0,1\}^{\frac{B}{\delta}}$ denotes the locations inspected and the vector $\mbs{\Xi}_n \in \mathbb{R}^{\frac{B}{\delta}}$ denotes the additive measurement noise per location. More generally, the observation $Y_n$ at time $n$ can be written as 
\begin{equation}
\label{eq:linear2}
Y_n = \langle\mbf{S}_n, \mbf{W}\rangle + Z_n(\mbf{S}_n),
\end{equation}
where $Z_n(\mbf{S}_n)$ is a noise term whose statistics are a function of the measurement vector $\mbf{S}_n$. The goal is to design the sequence of measurement vectors $\{\mbf{S}_n\}_{n = 1}^{\tau}$, such that the target location $\textbf{W}$ is estimated with high reliability, while keeping the (expected) number of measurements $\tau$ as low as possible.

In this paper, we first consider the linear model~\eqref{eq:linear1} when the elements of $\mbf{\Xi}_n$ are i.i.d Gaussian with zero mean and variance $\delta \sigma^2$. This means that $Z_n(\mbf{S}_n)$ in~\eqref{eq:linear2} are distributed as $\mathcal{N}(0, |\mbf{S}_n| \delta \sigma^2)$, and show that the problem of searching for a target under measurement dependent Gaussian noise $Z_n(\mbf{S}_n)$ is equivalent to channel coding over a binary additive white Gaussian noise (BAWGN) channel with state and feedback (in Section 4.6~\cite{Gallager}). This allows us not only to retrofit the known channel coding schemes based on sorted Posterior Matching (sort PM)~\cite{SungEnChiu} as adaptive search strategies, but also to obtain information theoretic converses to characterize fundamental limits on the target acquisition rate under both adaptive and non-adaptive strategies. As a corollary to the non-asymptotic analysis of our sorted Posterior-Matching-based adaptive strategy and our converse for non-adaptive strategy, we obtain a lower bound on the adaptivity gain.

\subsection{Our Contributions}

Our main results are inspired by the analogy between target acquisition under measurement dependent noise and channel coding with state and feedback. This connection was utilized in~\cite{DBLP:journals/corr/KaspiSJ16} under a Bernoulli noise model. In this paper, in Proposition~\ref{prop:connection}, we formalize the connection between our target acquisition problem with Gaussian measurement dependent noise and channel coding over a BAWGN channel with state. Here, the  channel state denotes the variance of the measurement dependent noise $ |\mbf{S}_n| \delta \sigma^2$. Since feedback codes i.e., adapting the codeword to the past channel outputs, are known to increase the capacity of a channel with state and feedback. This motivates us to use adaptivity when searching, i.e., to utilize past observations $\{Y_1, Y_2, \ldots, Y_{n-1}\}$ when selecting the next measurement vector $\mbf{S}_n$.  Furthermore, this information theoretic perspective allows us to quantify the increase in the adaptive target acquisition rate. Our analysis of improvement in the target acquisition rate as well as the adaptivity gain, measured as the reduction in expected number of measurements, while using an adaptive strategy over a non-adaptive strategy has two components. Firstly, we utilize information theoretic converse for an optimal non-adaptive search strategy to obtain a non-asymptotic lower bound on the minimum expected number of measurements required while maintaining a desired reliability. As a consequence, this provides the best non-adaptive target acquisition rate. Secondly, we utilize a feedback code based on Posterior Matching as a two-stage adaptive search strategy and obtain a non-asymptotic upper bound on the expected number of measurements while maintaining a desired reliability. These two components of our analysis allow us to characterize a lower bound on the increased target acquisition rate due to adaptivity. 

Our non-asymptotic analysis of adaptivity gain reveals two qualitatively different asymptotic regimes. In particular, we show that adaptivity gain depends on the manner in which the number of locations grow. We show that the adaptivity grows logarithmically in the number of locations, i.e., $O\left(\log \frac{B}{\delta} \right)$ when refining the search resolution $\delta$ ($\delta$ going to zero) and while keeping total search width $B$ fixed. On the other hand, we show that as the search width $B$ expands while keeping search resolution $\delta$ fixed, the adaptivity gain grows in the number of locations as $O\left(\frac{B}{\delta} \log \frac{B}{\delta} \right)$.


The problem of searching for a target under a binary measurement dependent noise, whose crossover probability increases with the weight of the measurement vector was studied by~\cite{DBLP:journals/corr/KaspiSJ16} and analyzed under sort PM strategy in~\cite{SungEnChiu}. In particular, \cite{DBLP:journals/corr/KaspiSJ16} and~\cite{SungEnChiu} provide asymptotic analysis of the adaptivity gain for the case where $B = 1$ and $\delta $ approaches zero. Our prior work~\cite{8007098} by utilizing a (suboptimal) hard decoding of Gaussian observation $Y_n$, strengthens~\cite{DBLP:journals/corr/KaspiSJ16} and~\cite{SungEnChiu} by also accounting for the regime in which $B$ grows. While the analysis in~\cite{8007098} strengthens the non-asymptotic bounds in~\cite{SungEnChiu} with Bernoulli noise it failed to provide tight analysis for our problem with Gaussian observations. In this paper, by strengthening our analysis in~\cite{8007098} we extend the prior work in three ways: (i) we consider the soft Gaussian observation $Y_n$, (ii) we obtain non-asymptotic achievability and converse analysis, and (iii) we characterize tight non-asymptotic adaptivity gain in the two asymptotically distinct regimes of $B \to \infty$ and $\delta \to 0$.

\subsection{Applications}
Our problem formulation addresses two challenging engineering problems which arise in the context of modern communication systems. We will discuss the two problems in the following examples and then provide the details of the state of art.

\begin{example}[Establishing initial access in mm-Wave communication] Consider the problem of detecting the direction of arrival for initial access in millimeter wave (mmWave) Communications. In mmWave communication, prior to data transmission the base station is tasked with aligning the transmitter and receiver antennas in the angular space. In other words, the base station's antenna pattern can be viewed as a measurement vector $\mbf{S}_n$ searching the angular space $B \subset (0, 360^{\circ})$. At each time $n$, the noise intensity depends on the base station's antenna pattern $\mbf{S}_n$ and the noisy observation $Y_n$ is a function of measurement dependent noise $Z_n(\mbf{S}_n)$. Here it is natural to characterize the fundamental limit on the measurement time as a function of asymptotically small~$\delta$. 
\end{example}

\begin{example}[Spectrum Sensing for Cognitive Radio] 
Consider the problem of opportunistically searching for a vacant subband of bandwidth $\delta$ over a total bandwidth of $B$. In this problem secondary user desires to locate the single stationary vacant subband quickly and reliably, by making measurements $\mbf{S}_n$ at every time $n$. At each time instant $n$, the noise intensity depends on the number of subbands probed as dictated by $\mbf{S}_n$ and noisy observation $Y_n$ is is a function of measurement dependent noise $Z_n(\mbf{S}_n)$. Here it is natural to characterize fundamental limit of the measurement time required for a secondary user to acquire the vacant subband as a function of the asymptotically large bandwidth $B$.
\end{example}

Giordani~et~al.~\cite{7460513} compare the exhaustive search like the Sequential Beamspace Scanning considered by Barati~et~al.~\cite{7421136}, where the base station sequentially searches through all angular sectors, against a two stage iterative hierarchical search strategy. In the first stage an exhaustive search identifies a coarse sector by repeatedly probing each coarse region for a predetermined SNR to be achieved. In the second stage an exhaustive search over all locations identifies the target. Giordani~et~al.~show that in general the adaptive iterative strategy reduces the number of measurements over exhaustive search except when desired SNR is too high, forcing the number of measurements required at each stage to get too large. We observe this in through our simulations in Section~\ref{sec:num_results}-A. In fact, as confirmed by our simulations random-coding-based non-adaptive strategies including the Agile-Link protocol~\cite{Abari_AgileLink}, outperform the repetition based adaptive strategies.

Past literature on spectrum sensing for cognitive radio~\cite{42_35Multibandjoint, 50AdaptiveMultiband, 55AdaptiveAgileCR} and support vector recovery~\cite{Nowak_CompSensing,Y_Kim_MACSensing} have focused on the problem where $\textbf{S}_n$ can be real or complex, with measurement independent noise applying both exhaustive search and multiple adaptive search strategies. In contrast, our work considers a simple binary model, $\textbf{S}_n \in \{0,1\}^{\frac{B}{\delta}}$, but captures the implications of measurement dependence of the noise, which is known in the spectrum sensing literature as noise folding. The problem of measurement dependent noise (known as noise folding) has been investigated in~\cite{Treichler_NoiseFolding} where non-adaptive design of complex measurements matrix satisfying RIP condition has been investigated.  Our work compliments this study by characterizing the gain associated with adaptively addressing the measurement dependent noise (noise folding), albeit for the simpler case of binary measurements. We note that the case of adptively finding a subset of a sufficiently large vacant bandwidth with noise folding is considered in~\cite{Sharma_Murthy}, where ideas from group testing and noisy binary search have been utilized. The solutions however depend strongly on  the availability of sufficiently large consective vacant band and does not apply to our setting.



\noindent\underline{Notations:} Vectors are denoted by boldface letters $\mbf{A}$ and $\mbf{A}{(j)}$ is the $j^{th}$ element of a vector. Matrices are denoted by overlined boldface letters.  Let $\mathcal{U}_M$ denote the set $\{\textbf{u}\in \mathbb{R}^M: u(j)\in\{0,1\}  \}$. Bern$(p)$ denotes the Bernoulli distribution with parameter $p$, $h(p) = - p\log p -(1-p)\log(1-p)$ denotes the entropy of a Bernoulli random variable with parameter $p$. Let $G(x; \mu, \sigma^2)$ denote the pdf of Gaussian random variable with mean $\mu$ and variance $\sigma^2$ at $x$. Logarithms are to the base 2. Let $[g]_a = g$ if $g \geq a$ otherwise $[g]_a = 0$.


\section{Problem Setup}
\label{sec:prob_setup}
In this section, we describe the mathematical formulation of the target acquisition problem followed by the performance criteria.

\subsection{Problem Formulation}
We consider a search agent interested in quickly and reliably finding the true location of a single stationary target by making measurements over time about the target's presence. In particular, we consider a total search region of width $B$ that contains the target in a location of width $\delta$. In other words, the search agent is searching for the target's location among $ \frac{B}{\delta}$ total locations. Let $\mbf{W} \in \mathcal{U}_{\frac{B}{\delta}}$ denote the true location of the target, where $\mbf{W}(j)=1$ if and only if target is located at location $j$. The target location $\mbf{W}$ can take $\frac{B}{\delta}$ possible values uniformly at random whose value remains fixed during the search. A measurement at time $n$ is given by a vector $\mbf{S}_n \in \mathcal{U}_{\frac{B}{\delta}}$, where $\mbf{S}_n(j)=1$ if and only if location $j$ is probed. Each measurement can be imagined to result in a clean observation $X_{n} =  \textbf{W}^{\intercal} \mbf{S}_{n} \in \{0,1 \}$ indicating of the presence of the target in the measurement vector $\textbf{S}_n$. However, only a noisy version of the clean observation $X_n$ is available to the agent.
The resulting noisy observation  $Y_n \in \mathbb{R}$ is given by the following linear model with additive measurement dependent noise
\begin{equation}
Y_n = X_{n}+  {Z}_{n}(\mbf{S}_n).
\label{eq:noisysearch}
\end{equation}
Here, we assume ${Z}_{n} \sim \mathcal{N}(0, |\textbf{S}_n|\delta \sigma^2)$ which corresponds to the case of i.i.d white Gaussian noise with $\sigma^2$ denotes the noise variance per unit width. Conditioned on the measurement vector $\mbf{S}_n$, the noise $Z_{n}$ is independent over time. 

A search consisting of $\tau$ measurements can be represented by a measurement matrix $\overline{\mbf{S}}^{\tau} = [\mbf{S}_1, \mbf{S}_2, \ldots , \mbf{S}_{\tau}]$ which yields the observation vector $\mbf{Y}^{\tau} = [Y_1, Y_2, \ldots, Y_{\tau}]$. At any time instant $n = 1,2 \ldots, \tau$, the agent selects the measurement vector in general as a function of the past observations and measurements. Mathematically,
\begin{align}
\mbf{S}_{n} = g_{n}\left(\mbf{Y}^{n-1}, \overline{\mbf{S}}^{n-1}\right),
\end{align} 
for some causal (possibly random) function $g_{n}: \mathbb{R}^{n-1} \times \mathcal{U}^{n-1}_{\frac{B}{\delta}}  \to \mathcal{U}_{\frac{B}{\delta}}$. After observing the noisy observations $\mbf{Y}^{\tau}$ and measurement matrix $\overline{\mbf{S}}^{\tau}$, the agent estimates the target location $\mbf{W}$ as follows
\begin{align}
\hat{\mbf{W}} = d\left( \mbf{Y}^{\tau}, \overline{\mbf{S}}^{\tau}\right),
\end{align}
for some decision function $d: \mathbb{R}^{\tau} \times \mathcal{U}^{\tau}_{\frac{B}{\delta}}  \to \mathcal{U}_{\frac{B}{\delta}}$. The probability of error  for a search is given by $\Pe = \P(\hat{\mbf{W}} \neq \mbf{W} | \mbf{Y}, \overline{\mbf{S}})$ and the average probability of error  is given by $\overline{\Pe} = \P(\hat{\mbf{W}} \neq \mbf{W})$.

Now we define the measurement strategy:

\begin{definition}[$\epsilon$-Reliable Search Strategy $\mathfrak{c}_{\epsilon}$]
For some $\epsilon \in(0, 1)$, an \textit{$\epsilon$-reliable search strategy}, denoted by $\mathfrak{c}_{\epsilon}$, is defined as a sequence of $\tau$ (possibly random) number of
causal functions $\{g_1, g_2, \ldots, g_{\tau}\}$, according to which the measurement matrix $\overline{\mbf{S}}^{\tau}$ is selected, and a decision function $d$ which provides an estimate $\mbf{\hat{W}}$ of $\mbf{W}$, such that the average probability of error $\overline{\Pe}$ is at most $\epsilon$.
\end{definition}

\begin{definition}[Achievable Target Acquisition Rate]
A target acquisition rate $R$ is said to be an \textit{$\epsilon$-achievable}, if for any small $\xi > 0$ and $n$ large enough, there exists an $\epsilon$-reliable search strategy $\mathfrak{c}_{\epsilon}$ such the following holds
\begin{align}
\expe_{\mathfrak{c}_{\epsilon}}[\tau] &\leq n, \\
\frac{B}{\delta} &\geq 2^{n(R-\xi)}.
\end{align}
A targeting rate $R$ is said to be \textit{achievable target acquisition rate} if it is $\epsilon$-achievable for all $\epsilon \in (0,1)$. 
\end{definition}
The above definition is motivated by information theoretic notion of transmission rate over a communication channel, which captures the exponential rate at which the number of messages grow with the number of channel uses while the receiver can decode with a small average error probability. Similarly, the target acquisition rate captures the exponential rate at which the number of target locations grow with the number of measurement vectors while a search strategy can still locate the target with a diminishing average error probability.

\begin{definition}[Target Acquisition Capacity]
The supremum of achievable target acquisition rates is called the target acquisition capacity.
\end{definition}

\subsection{Types of Search Strategies and Adaptivity Gain}

Each measurement vector $\mbf{S}_n$ and the number of total measurements $\tau$ can be selected either based on the past observations $\mbf{Y}^{n-1}$, or independent of them. Based on these two choices, strategies can be divided into four types i) having fixed length versus variable length number of the measurement matrix $\overline{\mbf{S}}$, and ii) being adaptive versus non-adaptive.   
A \textit{fixed length $\epsilon$-reliable strategy} $\mathfrak{c}_{\epsilon}$ uses a fixed number of measurements $\tau$ predetermined offline independent of the observations, to obtain estimate $\hat{\mbf{W}}$. On the other hand, a \textit{variable length $\epsilon$-reliable strategy} $\mathfrak{c}_{\epsilon}$ uses a random number of measurements $\tau$ (possibly determined as a function of the observations $\mbf{Y}^{\tau}$) to obtain estimate, $\hat{\mbf{W}}$. For example, $\tau$ can be selected such that agent achieves $\Pe \leq \epsilon$ in every search and hence $\tau$ is a random variable which is a function of the past noisy observations. Under an \textit{adaptive strategy} $\mathfrak{c}_{\epsilon} \in \mathcal{C}^A_{\epsilon}$ the agent designs the measurement vector $\textbf{S}_n$ as a function of the past observations $\mbf{Y}^{n-1}$, i.e., $g_n$ is a function of both $\textbf{S}^{n-1}$ and $\mbf{Y}^{n-1}$.

\begin{definition}
Let $\mathcal{C}^A_{\epsilon}$ be a class of all $\epsilon$-reliable adaptive strategies. 
\end{definition}

Under a \textit{non-adaptive strategy}, the agent designs the measurement vector $\textbf{S}_n$ offline independent of past observations, i.e., $g_n$ does not depend on $\textbf{S}^{n-1}$ or $\mbf{Y}^{n-1}$. 

\begin{definition}
Let $\mathcal{C}^{NA}_{\epsilon}$ be a class of all $\epsilon$-reliable non-adaptive strategies. 
\end{definition}

For any $\epsilon$-reliable strategy $\mathfrak{c}_{\epsilon}$, the performance is measured by the expected number of measurements $\expe_{\mathfrak{c}_{\epsilon}}[\tau]$. To achieve better reliability, i.e., smaller $\epsilon$, in general the agent requires larger $\expe_{\mathfrak{c}_{\epsilon}}[\tau]$.

\begin{definition}[Adaptivity Gain]
The adaptivity gain is defined as the best reduction in the expected number of measurements when searching with an $\epsilon$-reliable adaptive strategy $\mathfrak{c}^{\prime}_{\epsilon} \in \mathcal{C}^{A}_{\epsilon}$, over an $\epsilon$-reliable non-adaptive strategy $\mathfrak{c}_{\epsilon} \in \mathcal{C}^{NA}_{\epsilon}$. Mathematically, it is given as
\begin{align}
\min_{\mathfrak{c}_{\epsilon} \in \mathcal{C}^{NA}_{\epsilon}}\expe[\tau] - \min_{\mathfrak{c}^{\prime}_{\epsilon} \in \mathcal{C}^{A}_{\epsilon}}\expe[\tau^{\prime}].
\end{align}
\end{definition}

Hence, characterizing adaptivity gain allows us to characterize the improvement in target acquisition rate when using adaptive strategies over non-adaptive strategies.

\section{Preliminaries: Channel Coding with State and Feedback}
\label{sec:prelim}
In this section, we review fundamentals of channel coding with state and feedback and relevant literature to connect these information theoretic concepts to the problem of searching under measurement dependent noise discussed in the previous section. The aim is to formulate an equivalent model of channel coding with state and feedback for comparison to (\ref{eq:noisysearch}).

\label{ReviewChannelCoding}
\begin{figure}[!htb]
     \centering
     \includegraphics[ width=0.7\textwidth]{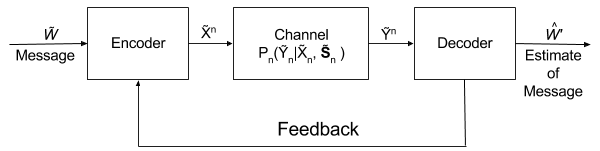}
     \caption{Transmission over a communication channel with state and feedback} 
 \label{fig:Basic}
 \end{figure}
A communication channel is specified by a set of inputs $\tilde{X} \in \tilde{\mathcal{X}}$, a set of outputs $\tilde{Y} \in \tilde{\mathcal{Y}}$, and a channel transition probability measure $\P(\tilde{y}|\tilde{x})$ for every $\tilde{x} \in \tilde{\mathcal{X}}$ and $\tilde{y} \in \tilde{\mathcal{Y}}$ that expresses the probability of observing a certain output $\tilde{y}$ given that an input $\tilde{x}$ was transmitted \cite{CoverBook2nd}. Throughout this work, we will concentrate on coding over a channel with state and feedback (section 4.6 in~\cite{Gallager}).  Formally, at time $n$ the channel state, $\tilde{\mbf{S}}_n$ belongs to a discrete and finite set $\tilde{\mathcal{A}}$. We assume that the channel state is known at both the encoder and the decoder. For a channel with state, the transition probability at time $n$ is specified by the conditional probability assignment $\P_n\left(\tilde{Y}_n |\tilde{X}_n, \tilde{\mbf{S}}_n \right)$. Transmission over such a channel is shown in Figure~\ref{fig:Basic}. In general, the channel state $\tilde{\mbf{S}}_n$ at time $n$ evolves as a function of all past outputs and all past states, 
\begin{equation} 
\label{eq:state}
\tilde{\mbf{S}}_n = \tilde{g}_n(\tilde{Y}_1, \tilde{Y}_2,\ldots, \tilde{Y}_{n-1}, \tilde{\mbf{S}}_1, \tilde{\mbf{S}}_2,\ldots, \tilde{\mbf{S}}_{n-1}).
\end{equation}
The goal is to encode and transmit a uniformly distributed message $\tilde{\mbf{W}} \in [M] $ over the channel. The encoding function $\phi_n$ at any time $n$ depends on the message to be transmitted $\tilde{\mbf{W}}$, all past states, and all the past outputs. Thus the next symbol to be transmitted is given by 
\begin{equation} 
\tilde{X}_n = \phi_n(\tilde{Y}_1, \tilde{Y}_2, \ldots, \tilde{Y}_{n-1}, \tilde{\mbf{S}}_1, \tilde{\mbf{S}}_2, ..., \tilde{\mbf{S}}_{n}, \tilde{\mbf{W}}).
\end{equation}
The encoder obtains the past outputs from the decoder due to the availability of a noiseless feedback channel from decoder to encoder. In this paper, we assume that both encoder and decoder know the evolution of the channel state, i.e., the sequence $\{\mbf{\mbf{S}}_n\}_{n \geq 1}$. After $\tau$  channel uses, the decoder uses the noisy observations $\tilde{\mbf{Y}}^{\tau}$ and state information $\{\tilde{\mbf{S}}_1, \tilde{\mbf{S}}_2, \ldots, \tilde{\mbf{S}}_{\tau}\}$ to find the best estimate $\tilde{\mbf{W}}^{\prime}$, of the message $\tilde{\mbf{W}}$. The probability of error  at the end of message transmission is given by $\Pe = \P(\tilde{\mbf{W}}^{\prime} \neq \tilde{\mbf{W}} | \tilde{\mbf{Y}}, \{\tilde{\mbf{S}}_1, \tilde{\mbf{S}}_2, \ldots, \tilde{\mbf{S}}_{\tau}\})$ and the average probability of error  is given by $\overline{\Pe} = \P(\tilde{\mbf{W}}^{\prime} \neq \tilde{\mbf{W}})$.

\begin{example}[Binary Additive White Gaussian Noise channel with State and feedback]
\label{ex:bawgn}
Consider a Binary Additive White Gaussian Noise (BAWGN) channel with noisy output $\tilde{Y}_n$ given as the sum of input $\tilde{X}_n \in \{0,1\}$ and  Gaussian random variable $\tilde{Z}_n \in \mathbb{R}$ whose distribution is a function of the channel state $\tilde{\mbf{S}}_n$. Specifically, $\tilde{Z}_n$ is a Gaussian random variable with state dependent noise variance $|\tilde{\mbf{S}}_n|\delta\sigma^2$ for some $\delta> 0$. 
In other words, we have
\begin{equation}
\tilde{Y}_n = \tilde{X}_{n} + \tilde{Z}_{n}(\tilde{\mbf{S}}_n), 
\label{eq:Gaussianoutput}
\end{equation}
where $\tilde{Z}_n \sim \mathcal{N}(0, |\tilde{\mbf{S}}_n|\delta\sigma^2)$, and the state evolves as $\tilde{\mbf{S}}_n = \tilde{g}_n(\tilde{Y}_1, \tilde{Y}_2, \ldots, \tilde{Y}_{n-1}, \tilde{\mbf{S}}_1, \tilde{\mbf{S}}_2, \ldots, \tilde{\mbf{S}}_{n-1})$. Transmission over a BAWGN channel is illustrated in Figure~\ref{fig:Gaussian}. 
\begin{figure}[!htb]
     \centering
     \includegraphics[ width=0.5\textwidth]{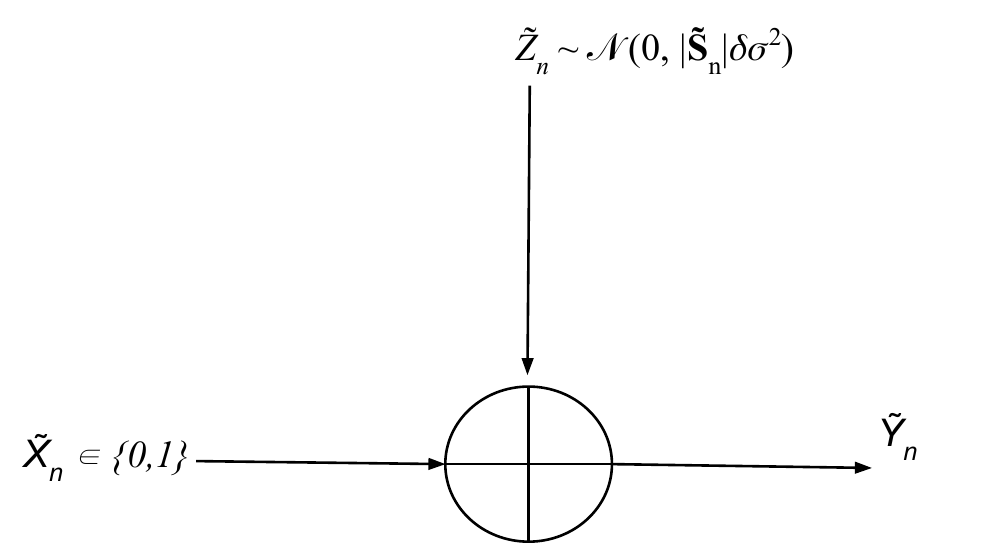}
     \caption{Transmission over a BAWGN channel with binary input $\tilde{X}_n$ and Gaussian noise $\tilde{Z}_n$.} 
 \label{fig:Gaussian}
 \end{figure}
\end{example}

\begin{proposition}
\label{prop:connection}
The problem of searching under measurement dependent Gaussian noise is equivalent to the problem of channel coding over a BAWGN channel with state and feedback. Specifically,  
\begin{itemize}
\item
The true location vector $\mbf{W}$ can be cast as a message $\tilde{\mbf{W}}$ to be transmitted over the BAWGN. Therefore, there are $\frac{B}{\delta}$ possible messages.

\item
An $\epsilon$-reliable search strategy $\mathfrak{c}_{\epsilon}$ provides a sequence of $\{g_1, g_2, \ldots, g_{\tau}\}$ such that $\P(\tilde{\mbf{W}}^{\prime} \neq \tilde{\mbf{W}}) \leq \epsilon$. Hence, setting $\tilde{g}_i = g_i$ for all $i \in \{1, 2, \ldots, \tau\}$, the search strategy dictates the evolution of channel states $\tilde{\mbf{S}}_n$.

\item
The measurement matrix $\overline{\mbf{S}}^{\tau}$ can be used as the codebook, i.e., by setting  $\{\tilde{\mbf{S}}_1, \tilde{\mbf{S}}_2, \ldots, \tilde{\mbf{S}}_{\tau}\} = \overline{\mbf{S}}$. Specifically, codewords are obtained by setting $\tilde{X}_n = \phi_n(\tilde{\mbf{Y}}^{n-1},\tilde{ \mbf{S}}_1, \tilde{\mbf{S}}_2, \ldots, \tilde{\mbf{S}}_n, \tilde{\mbf{W}}) = \tilde{\mbf{W}}^{\intercal}\tilde{\mbf{S}}_n$.

\item
The measurement vector fixes the channel transition probability measure as $\P(\tilde{Y}_n| \tilde{x}_n,\tilde{\textbf{S}}_n) = \mathcal{N}(\tilde{x}_n, |\tilde{\textbf{S}}_n|\delta\sigma^2)$ since noise distribution is  $\tilde{Z}_n \sim \mathcal{N}(0, |\tilde{\mbf{S}}_n|\delta \sigma^2)$ for $\tilde{x}_n \in \{ 0, 1\}$. Hence, the channel state depends on measurement vector.

\end{itemize}
\end{proposition}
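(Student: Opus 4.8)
The plan is to establish the equivalence claimed in Proposition~\ref{prop:connection} by exhibiting an explicit correspondence between the objects of the two problems and then verifying that the induced probability laws coincide. The equivalence is essentially a ``dictionary'' argument: every ingredient of the target acquisition problem of~\eqref{eq:noisysearch} maps to an ingredient of the BAWGN-channel-with-state-and-feedback problem of Example~\ref{ex:bawgn}, and conversely. I would organize the proof around the four bullet points already itemized in the statement, proving each direction of the correspondence and then observing that the two directions are mutually inverse.

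First I would set up the forward map. Given an $\epsilon$-reliable search strategy $\mathfrak{c}_\epsilon = (\{g_n\}_{n=1}^{\tau}, d)$ for the target problem, define a channel code by taking the message set $[M]$ with $M = \tfrac{B}{\delta}$ (identifying the unit vector $\mbf{W} \in \mathcal{U}_{B/\delta}$ with its support index $\tilde{\mbf{W}}$), setting the state-evolution functions $\tilde g_n = g_n$ so that $\tilde{\mbf{S}}_n = \tilde g_n(\tilde{\mbf{Y}}^{n-1}, \tilde{\mbf{S}}^{n-1})$ reproduces the measurement-vector recursion, setting the encoder $\phi_n(\tilde{\mbf{Y}}^{n-1}, \tilde{\mbf{S}}_1,\ldots,\tilde{\mbf{S}}_n, \tilde{\mbf{W}}) = \tilde{\mbf{W}}^{\intercal}\tilde{\mbf{S}}_n$, and taking the decoder to be $d$. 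The key verification here is that, under this identification, the channel output $\tilde Y_n = \tilde X_n + \tilde Z_n(\tilde{\mbf{S}}_n)$ with $\tilde Z_n \sim \mathcal{N}(0, |\tilde{\mbf{S}}_n|\delta\sigma^2)$ has exactly the same conditional law $\P(\tilde Y_n \mid \tilde X_n = \tilde x_n, \tilde{\mbf{S}}_n) = G(\,\cdot\,; \tilde x_n, |\tilde{\mbf{S}}_n|\delta\sigma^2)$ as the search observation $Y_n = X_n + Z_n(\mbf{S}_n)$ in~\eqref{eq:noisysearch} — and crucially that $X_n = \mbf{W}^{\intercal}\mbf{S}_n$ matches $\tilde X_n = \tilde{\mbf{W}}^{\intercal}\tilde{\mbf{S}}_n$ under the identification. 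Since the state recursion, the encoder, and the per-symbol channel all agree, an induction on $n$ shows the joint law of $(\tilde{\mbf{W}}, \tilde{\mbf{S}}^{\tau}, \tilde{\mbf{Y}}^{\tau})$ equals that of $(\mbf{W}, \overline{\mbf{S}}^{\tau}, \mbf{Y}^{\tau})$, whence $\overline{\Pe}$ is preserved and reliability transfers. Then I would run the reverse map — given a feedback code for the BAWGN-with-state channel whose state evolves causally as in~\eqref{eq:state}, read off $g_n := \tilde g_n$ and $d := $ the channel decoder — and note that, because the encoder in this problem is forced (up to relabeling) to send $\tilde{\mbf{W}}^{\intercal}\tilde{\mbf{S}}_n$ once we insist the ``message'' be a location index and the ``input'' be the clean indicator, the same joint-law identity holds in reverse.

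The main obstacle, and the step deserving the most care, is making precise the sense in which the two encoders coincide: in the general channel-coding setup the encoder $\phi_n$ may be an arbitrary function of $(\tilde{\mbf{Y}}^{n-1}, \tilde{\mbf{S}}^{n}, \tilde{\mbf{W}})$, whereas the target problem implicitly fixes the transmitted symbol to be the deterministic indicator $X_n = \mbf{W}^{\intercal}\mbf{S}_n$. The resolution is that the equivalence is asserted for the \emph{specific} BAWGN model of Example~\ref{ex:bawgn} together with the constraint $\tilde X_n = \tilde{\mbf{W}}^{\intercal}\tilde{\mbf{S}}_n$ highlighted in the third bullet; with this restriction the encoder carries no freedom beyond the choice of state/measurement recursion, so the two problems have identical degrees of freedom (the causal functions $\{g_n\}$ and the decoder $d$) and identical figures of merit ($\expe[\tau]$ and $\overline{\Pe}$). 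I would state this constraint explicitly at the outset of the proof and then the remaining steps are the routine law-matching induction sketched above; the probability of error $\Pe$ and its average $\overline{\Pe}$ then agree term-by-term, and achievable target acquisition rates correspond exactly to achievable coding rates for the constrained channel, completing the equivalence.
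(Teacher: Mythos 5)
Your proposal is correct and follows essentially the same route as the paper, which establishes the equivalence purely through the itemized dictionary (message $\leftrightarrow$ location, state evolution $\leftrightarrow$ measurement strategy, constrained encoder $\tilde{X}_n = \tilde{\mbf{W}}^{\intercal}\tilde{\mbf{S}}_n$, and matching Gaussian transition law); your added law-matching induction and explicit treatment of the encoder constraint simply make rigorous what the paper leaves as a direct identification.
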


A coding scheme for a channel with state and feedback can double as a search strategy. This general approach of search using channel codes provides an efficient way to design and
compare non-adaptive and adaptive search strategies. This also implies that feedback can improve the capacity of a channel with state which is what we characterize as our adaptivity gain for the problem of searching under measurement dependent noise.

\begin{definition}
The BAWGN capacity with input distribution $\ber(q)$ and noise variance $\sigma^2$ is defined as 
\begin{align}
C_{\text{BAWGN}}\left(q, \sigma^2\right) 
&:= -\int_{-\infty}^{\infty} \left( (1-q) G(y; 0, \sigma^2) + q G(y; 1, \sigma^2)\right) \times
\nonumber
\\
&\hspace{0.5cm}\times \log \left( (1-q) G(y; 0, \sigma^2) + q G(y; 1, \sigma^2)\right)
\nonumber
\\
& \hspace{0.5cm}- \frac{1}{2}\log(2\pi e  \sigma^2).
\end{align}
\end{definition}

\begin{corollary}
\label{cor:channelstate}
From channel coding over a BAWGN channel with state and feedback, we obtain that for any small $\xi > 0$ and $n$ large enough, there exists an $\epsilon$-reliable search strategy $\mathfrak{c}_{\epsilon}$ such the following holds
\begin{align}
\expe_{\mathfrak{c}_{\epsilon}}[\tau] &\leq n, \\
2^{n(C_{\text{BAWGN}}(\frac{1}{2}, \frac{B\sigma^2}{2})-\xi)} &\overset{(a)}\leq \frac{B}{\delta} \overset{(b)}< 2^{nC_{\text{BAWGN}}(\frac{1}{2}, \delta \sigma^2)},
\end{align}
where $(a)$ follows from Theorem 4.6.1 in~\cite{Gallager} and $(b)$ follows by combining the fact that the best channel is obtained when noise variance is the least, i.e., $\delta \sigma^2$, with the converse of the noisy channel coding theorem~\cite{CoverBook2nd}.
\end{corollary}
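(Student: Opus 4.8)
The plan is to read off both inequalities directly from the equivalence of Proposition~\ref{prop:connection}, invoking the achievability and the converse of channel coding over the BAWGN channel with state and feedback of Example~\ref{ex:bawgn}. Under that equivalence, an $\epsilon$-reliable search over $\frac{B}{\delta}$ locations using $\tau$ measurements is exactly a (possibly variable-length) feedback code carrying $\frac{B}{\delta}$ equiprobable messages over $\tau$ uses of a channel whose state at time $n$ is the probed weight $|\mbf{S}_n|\in\{1,\dots,\frac{B}{\delta}\}$ and whose noise variance is therefore $|\mbf{S}_n|\delta\sigma^2\in\{\delta\sigma^2,\dots,B\sigma^2\}$; this state sequence is known to the searcher and the decoder because it is determined by the public rule $\{g_n\}$, so all the standard machinery for channels with known state and feedback applies.

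For~$(a)$ I would fix the state evolution to the balanced rule under which every measurement probes exactly $\lceil\frac{B}{2\delta}\rceil$ locations, so the induced channel is a fixed memoryless BAWGN channel with noise variance $\frac{B\sigma^2}{2}$. Theorem~4.6.1 of~\cite{Gallager} applied to this fixed channel with the $\ber(\frac{1}{2})$ input distribution gives, for any $\xi>0$ and all sufficiently large $n$, a length-$n$ code with at least $2^{n(C_{\text{BAWGN}}(\frac{1}{2},\frac{B\sigma^2}{2})-\xi)}$ messages and average error at most $\epsilon$; reading this code back through Proposition~\ref{prop:connection} and taking $\tau\equiv n$ produces an $\epsilon$-reliable strategy with $\expe_{\mathfrak{c}_\epsilon}[\tau]\le n$, which forces $\frac{B}{\delta}\ge 2^{n(C_{\text{BAWGN}}(\frac{1}{2},\frac{B\sigma^2}{2})-\xi)}$. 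The only nontrivial point here is that the codebook induced by a measurement matrix has correlated rows and columns rather than i.i.d.\ $\ber(\frac{1}{2})$ entries; this is dispatched by drawing the $\lceil\frac{B}{2\delta}\rceil$-subsets uniformly at random (so each location lies in $\mbf{S}_n$ with probability $\approx\frac{1}{2}$) and checking that the pairwise independence needed in the random-coding union bound still holds, or by a constant-composition argument.

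For~$(b)$ I would argue that neither feedback nor knowledge of the state can raise the per-use mutual information above the capacity of the least-noisy available channel: any informative measurement has $|\mbf{S}_n|\ge 1$, hence noise variance at least $\delta\sigma^2$, and for a BAWGN channel with fixed noise variance $v$ the symmetry $x\mapsto 1-x,\ y\mapsto 1-y$ makes $q=\frac{1}{2}$ the optimal input, so $I(\tilde X_n;\tilde Y_n\mid\tilde{\mbf{S}}_n)\le C_{\text{BAWGN}}(\frac{1}{2},\delta\sigma^2)$ at every time $n$. Inserting this per-letter bound into Fano's inequality evaluated at the stopping time $\tau$ (the variable-length form of the converse of the noisy channel coding theorem~\cite{CoverBook2nd}), together with $\expe_{\mathfrak{c}_\epsilon}[\tau]\le n$ and $\overline{\Pe}\le\epsilon$, yields $\log\frac{B}{\delta}<n\,C_{\text{BAWGN}}(\frac{1}{2},\delta\sigma^2)$. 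I expect the main obstacle to be exactly this variable-length bookkeeping: Theorem~4.6.1 of~\cite{Gallager} is stated for a fixed blocklength, so in~$(a)$ one must confirm that deterministically stopping at $n$ costs nothing, and in~$(b)$ one must use a genuinely variable-length converse (Fano at the random time $\tau$, or a Burnashev-type argument) in place of the plain block converse; the information-theoretic substance of both bounds, however, is already contained in the cited theorems.
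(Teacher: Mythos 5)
Your proposal is correct and follows essentially the same route as the paper, which offers no separate proof beyond the inline citations: achievability $(a)$ by fixing the state to probe half the region (inducing a fixed BAWGN channel with variance $\frac{B\sigma^2}{2}$ and $\ber(\frac{1}{2})$ input) and invoking Gallager's coding theorem, and converse $(b)$ by bounding the per-measurement mutual information by the capacity of the least-noisy channel (variance $\delta\sigma^2$) and applying Fano. Your added caveats about the constant-composition codebook and the variable-length bookkeeping are sensible refinements of details the paper leaves implicit.
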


\color{black}
\section{Main Results}
\label{sec:main_results}

In this section, we characterize a lower bound on the adaptivity gain 
$\min_{\mathfrak{c}_{\epsilon} \in \mathcal{C}^{NA}_{\epsilon}}\expe[\tau] - \min_{\mathfrak{c}^{\prime}_{\epsilon} \in \mathcal{C}^{A}_{\epsilon}}\expe[\tau^{\prime}]$; the performance improvement measured in terms of reduction in the expected number of measurements for searching over a width $B$ among $\frac{B}{\delta}$ locations under measurement dependent Gaussian noise.

\begin{theorem}
\label{thm:gain_lower_bound}
Let $\epsilon \in (0,1)$. For any $\epsilon$-reliable non-adaptive strategy $\mathfrak{c}_{\epsilon} \in \mathcal{C}^{NA}_{\epsilon}$ searching over a search region of width $B$ among $\frac{B}{\delta}$ locations with $\tau$ number of measurements, there exists an $\epsilon$-reliable adaptive strategy $\mathfrak{c}^{\prime}_{\epsilon} \in \mathcal{C}^{A}_{\epsilon}$ with $\tau^{\prime}$ number of measurements, such that for some small constant $\eta > 0$ the following holds
\begin{align*}
\expe_{\mathfrak{c}_{\epsilon}}[\tau] - \expe_{\mathfrak{c}^{\prime}_{\epsilon}}[\tau^{\prime}]
&
\geq \max_{\alpha \in \mathcal{I}_{\frac{B}{\delta}}} \left\{ 
\log \frac{1}{\alpha} \left(\frac{(1-\epsilon) }{C_{\text{BAWGN}}(q^{\ast}, q^{\ast}B \sigma^2)}  \right.   -\frac{1}{C_{\text{BAWGN}} \left( q^{\ast}, q^{\ast} B \sigma^2\right) -  \eta}
\right)
\nonumber
\\
&
\quad + \log \frac{\alpha B}{\delta} \left(\frac{(1-\epsilon) }{
C_{\text{BAWGN}}(q^{\ast}, q^{\ast}B\sigma^2)} 
 - \frac{1}{
C_{\text{BAWGN}} \left(\frac{1}{2}, \frac{\alpha B \sigma^2}{2}\right)-\eta} \right)
\\
\nonumber
&\left. \quad
- h(B, \delta, \sigma^2, \alpha, \epsilon, \eta)  \right\},
\end{align*}
where
\begin{align*}
h(B, \delta, \sigma^2, \alpha, \epsilon, \eta)
&=
\frac{\log \left( \frac{2}{\epsilon}\right) + \log \log \left( \frac{1}{\alpha}\right) + a_{\eta}}{C_{\text{BAWGN}} \left( q^{\ast}, q^{\ast}B \sigma^2\right) -\eta} 
+ \frac{\log \left( \frac{2}{\epsilon}\right) +  \log \log \left( \frac{\alpha B}{\delta }\right) + a_{\eta}}{C_{\text{BAWGN}}\left(\frac{1}{2}, \frac{\alpha B \sigma^2}{2} \right) - \eta}
\nonumber
\\
& \quad + \frac{h(\epsilon)}{C_{\text{BAWGN}}^{(B, \delta, \sigma^2)}(q^{\ast}, q^{\ast}B \sigma^2)} ,
\end{align*} 
$
q^{*}
=
\argmax_{q \in \mathcal{I}_{\frac{B}{\delta}}} C_{\text{BAWGN}}(q, qB\sigma^2),
$
and $a_{\eta}$ is the solution of the following equation
\begin{align}
\eta =\frac{a}{a-3}\max_{q \in \mathcal{I}_{\frac{B}{\delta}}}\int_{-\infty}^{\infty} \frac{e^{-\frac{y^2}{2Bq\sigma^2}}}{\sqrt{2 \pi qB \sigma^2}} \left[ \frac{2y-1}{2qB\sigma^2}\right]_{(a-3)} dy .
\end{align}

\end{theorem}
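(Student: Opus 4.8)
The plan is to establish the bound by combining two one‑sided estimates: a non‑asymptotic \emph{lower} bound on $\expe_{\mathfrak{c}_{\epsilon}}[\tau]$ valid for every non‑adaptive strategy, and a non‑asymptotic \emph{upper} bound on $\expe_{\mathfrak{c}^{\prime}_{\epsilon}}[\tau^{\prime}]$ achieved by an explicit two‑stage adaptive strategy, and then subtracting. For the non‑adaptive converse, I would invoke Proposition~\ref{prop:connection} to view any $\mathfrak{c}_{\epsilon} \in \mathcal{C}^{NA}_{\epsilon}$ as a fixed codebook over the BAWGN channel with state, observe that a non‑adaptive strategy must commit to a marginal input fraction $q$ and hence that the per‑measurement mutual information is at most $C_{\text{BAWGN}}(q, qB\sigma^2)$ (the worst state, i.e.\ largest noise variance $qB\sigma^2$, dominates because the strategy cannot steer the state). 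A Fano‑type argument over $\frac{B}{\delta}$ equiprobable messages, optimized over $q \in \mathcal{I}_{\frac{B}{\delta}}$, then yields
\[
\expe_{\mathfrak{c}_{\epsilon}}[\tau] \;\geq\; \frac{(1-\epsilon)\log \frac{B}{\delta} - h(\epsilon)}{C_{\text{BAWGN}}(q^{\ast}, q^{\ast}B\sigma^2)},
\]
which is exactly the first term once $\log\frac{B}{\delta}$ is split as $\log\frac{1}{\alpha} + \log\frac{\alpha B}{\delta}$.

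For the adaptive achievability I would use the sorted Posterior Matching (sort PM) feedback code described in the introduction, run in two stages. In the first stage the agent probes with weight fraction $q^{\ast}$ to localize the target into a coarse block of width $\alpha B$ (there are effectively $\frac{1}{\alpha}$ such blocks), and in this regime the relevant channel is $C_{\text{BAWGN}}(q^{\ast}, q^{\ast}B\sigma^2)$; in the second stage the agent searches within the block of width $\alpha B$ using balanced measurements $q = \tfrac12$, so the operative channel has noise variance $\tfrac{\alpha B \sigma^2}{2}$ and rate $C_{\text{BAWGN}}(\tfrac12, \tfrac{\alpha B \sigma^2}{2})$. Using the non‑asymptotic stopping‑time analysis of sort PM — i.e.\ a submartingale/drift argument on the log‑posterior of the true location, with the $\eta$‑slack accounting for the deviation of the one‑shot information increments from their mean and $a_{\eta}$ controlling the third‑moment (renewal‑theory overshoot) correction via the displayed fixed‑point equation — I would bound the expected duration of stage $i$ by $\frac{\log(\text{size}_i)}{C_i - \eta}$ plus a $\frac{\log(2/\epsilon) + \log\log(\text{size}_i) + a_{\eta}}{C_i - \eta}$ overhead term, where $(\text{size}_1, C_1) = (\tfrac1\alpha, C_{\text{BAWGN}}(q^{\ast}, q^{\ast}B\sigma^2))$ and $(\text{size}_2, C_2) = (\tfrac{\alpha B}{\delta}, C_{\text{BAWGN}}(\tfrac12, \tfrac{\alpha B \sigma^2}{2}))$. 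Summing the two stages gives $\expe_{\mathfrak{c}^{\prime}_{\epsilon}}[\tau^{\prime}] \leq \frac{\log \frac1\alpha}{C_1 - \eta} + \frac{\log\frac{\alpha B}{\delta}}{C_2 - \eta} + (\text{the first two terms of }h)$; subtracting from the converse, regrouping the $\log\frac1\alpha$ and $\log\frac{\alpha B}{\delta}$ coefficients, and taking the maximum over $\alpha$ produces the claimed inequality, with $h(B,\delta,\sigma^2,\alpha,\epsilon,\eta)$ collecting all lower‑order overheads.

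The main obstacle I anticipate is the achievability side: making the two‑stage sort PM analysis genuinely non‑asymptotic and uniform requires a careful stopping‑time bound in which the martingale increments have a heavy (one‑sided) tail coming from the Gaussian log‑likelihood ratio $\frac{2y-1}{2qB\sigma^2}$, so a naive Wald identity is not enough — one needs a truncation at level $a-3$ and a renewal bound on the overshoot, which is precisely what the fixed‑point equation defining $a_{\eta}$ encodes. A secondary but real difficulty is handling the handoff between the two stages: one must verify that conditioning on the (high‑probability) event that stage~1 correctly identifies the coarse block costs only an additional $\frac{h(\epsilon)}{C_{\text{BAWGN}}^{(B,\delta,\sigma^2)}(q^{\ast}, q^{\ast}B\sigma^2)}$ in expectation (the last term of $h$), and that the overall error probability stays below $\epsilon$ by a union bound across the two stages, which forces the $\log(2/\epsilon)$ rather than $\log(1/\epsilon)$ in the overhead terms. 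Finally, the converse's use of the \emph{worst} state $qB\sigma^2$ for the non‑adaptive class must be argued cleanly, since this is the crux of why adaptivity helps at all.
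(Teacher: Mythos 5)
Your proposal follows essentially the same route as the paper: the theorem is obtained by subtracting the two-stage achievability upper bound (fixed-composition search over $1/\alpha$ coarse blocks at weight fraction $q^{\ast}$, then sorted posterior matching within the $\alpha B$ block, each analyzed via a submartingale drift bound on the log-posterior with the $a_{\eta}$ truncation) from the Fano-based non-adaptive converse, and splitting $\log\frac{B}{\delta} = \log\frac{1}{\alpha} + \log\frac{\alpha B}{\delta}$. The only slip is bookkeeping: the $h(\epsilon)/C_{\text{BAWGN}}(q^{\ast},q^{\ast}B\sigma^2)$ term in $h(\cdot)$ comes from the Fano residual in the converse, not from a stage-handoff cost in the achievability, and the first stage in the paper is a non-adaptive fixed-composition code rather than sort PM.
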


Proof of Theorem~\ref{thm:gain_lower_bound} is obtained by combining Lemma~\ref{lemma:converse_k_1} and  Lemma~\ref{lemma:achv}. Theorem~\ref{thm:gain_lower_bound} provides a non-asymptotic lower bound on adaptivity gain. The bound can be viewed as two parts corresponding to two stages. Intuitively, the first part corresponds to the initial stage of the search, where the agent narrows down the target's location to some coarse $\alpha$ fractions of the total search region, i.e., narrows to a section of width $\alpha B$ with high confidence. The second stage corresponds to refined the search within one of the coarse sections $\alpha B$ obtained from initial stage. This implies that an adaptive strategy can zoom in and confine the search to a smaller section to reduce the noise intensity. Whereas, a non adaptive strategy does not adapt to zoom in, and thus performs equally in both stages. We formalize this intuition in Lemma~\ref{lemma:achv}. Optimizing over $\alpha$ fraction of the first search we obtain a bound on expected number of measurements.  We obtain the following corollary as a consequence of Theorem~\ref{thm:gain_lower_bound}.

\begin{corollary}
\label{cor:two_regime_gains}
Let $\epsilon \in (0,1)$. For any $\epsilon$-reliable non-adaptive strategy $\mathfrak{c}_{\epsilon} \in \mathcal{C}^{NA}_{\epsilon}$ searching over a search region of width $B$ among $\frac{B}{\delta}$ with $\tau$ number of measurements, there exists an $\epsilon$-reliable adaptive strategy $\mathfrak{c}^{\prime}_{\epsilon} \in \mathcal{C}^{A}_{\epsilon}$ with $\tau^{\prime}$ number of measurements, such that for a fixed $B$ the asymptotic adaptivity gain grows logarithmically with the total number of locations,
\begin{align}
\label{eq:delta_gain}
\lim_{\delta \to 0} \frac{\expe_{\mathfrak{c}_{\epsilon}}[\tau] - \expe_{\mathfrak{c}^{\prime}_{\epsilon}}[\tau^{\prime}]}{\log \frac{B}{\delta}} 
\geq 
\frac{1-\epsilon}{ C_{\text{BAWGN}}(q^{\ast}, q^{\ast}B \sigma^2)} - 1.
\end{align} 
For a fixed $\delta$, the asymptotic adaptivity gain grows at least linearly with total number of locations, 
\begin{align}
\label{eq:B_gain}
\lim_{B \to \infty} \frac{\expe_{\mathfrak{c}_{\epsilon}}[\tau] - \expe_{\mathfrak{c}^{\prime}_{\epsilon}}[\tau^{\prime}]}{\frac{B}{\delta} \log \frac{B}{\delta}} 
\geq 
\frac{(1-\epsilon) \sigma^2 \delta }{\log e}.
\end{align}
Furthermore, we have
\begin{align}
\label{eq:B_NA}
\lim_{B \to \infty}  \frac{\min_{\mathfrak{c}_{\epsilon} \in \mathcal{C}_{\epsilon}^{NA}}\expe_{\mathfrak{c}_{\epsilon}}[\tau]}{\frac{B}{\delta} \log \frac{B}{\delta}} 
\geq 
\frac{(1-\epsilon) \sigma^2 \delta }{\log e},
\end{align} 
and
\begin{align}
\label{eq:B_A}
\lim_{B \to \infty} \frac{\min_{\mathfrak{c}_{\epsilon} \in \mathcal{C}_{\epsilon}^{A}} \expe_{\mathfrak{c}^{\prime}_{\epsilon}}[\tau^{\prime}]}{\frac{B}{\delta} } 
= 0.
\end{align} 

\end{corollary}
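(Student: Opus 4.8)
The plan is to deduce all four limits from the non-asymptotic estimates already in hand: Theorem~\ref{thm:gain_lower_bound} yields the two gain statements \eqref{eq:delta_gain} and \eqref{eq:B_gain}, Lemma~\ref{lemma:converse_k_1} yields \eqref{eq:B_NA}, and Lemma~\ref{lemma:achv} (equivalently Corollary~\ref{cor:channelstate}) yields \eqref{eq:B_A}. In every regime the only real work is a judicious choice of the free fraction $\alpha$ in the bound of Theorem~\ref{thm:gain_lower_bound}, sending $\eta\downarrow 0$, and tracking which terms survive after dividing by the stated normalization. The single analytic ingredient used throughout is the behaviour of the BAWGN capacity at its two extreme noise levels: $C_{\text{BAWGN}}(\tfrac12,v)\to 1$ as $v\downarrow 0$ (the noiseless binary channel), and $C_{\text{BAWGN}}(q,qB\sigma^2)\le\tfrac12\log\bigl(1+\tfrac{1-q}{B\sigma^2}\bigr)\le\tfrac{\log e}{2B\sigma^2}$, so that $C_{\text{BAWGN}}(q^{\ast},q^{\ast}B\sigma^2)$ is of order $1/B$ as $B\to\infty$; note also that for $B$ fixed, $q\mapsto C_{\text{BAWGN}}(q,qB\sigma^2)$ tends to $0$ as $q\downarrow 0$ and is positive at $q=\tfrac12$, so its maximum over $\mathcal{I}_{B/\delta}$ converges, as $\delta\downarrow 0$, to a strictly positive constant.

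For \eqref{eq:delta_gain} I hold $B$ fixed and take $\alpha=\alpha(\delta)$ vanishing slowly, e.g.\ $\alpha=1/\log\tfrac{B}{\delta}$, so that simultaneously $\log\tfrac1\alpha=\log\log\tfrac{B}{\delta}=o(\log\tfrac{B}{\delta})$, $\log\tfrac{\alpha B}{\delta}\sim\log\tfrac{B}{\delta}$, and $\tfrac{\alpha B\sigma^2}{2}\downarrow 0$, whence $C_{\text{BAWGN}}(\tfrac12,\tfrac{\alpha B\sigma^2}{2})\to 1$. Dividing the bound of Theorem~\ref{thm:gain_lower_bound} by $\log\tfrac{B}{\delta}$: its first line vanishes in the limit, since $\log\tfrac1\alpha=o(\log\tfrac{B}{\delta})$ and its coefficient $\tfrac{1-\epsilon}{C_{\text{BAWGN}}(q^{\ast},q^{\ast}B\sigma^2)}-\tfrac1{C_{\text{BAWGN}}(q^{\ast},q^{\ast}B\sigma^2)-\eta}$ is a fixed constant; the correction $h(B,\delta,\sigma^2,\alpha,\epsilon,\eta)$ vanishes since with this $\alpha$ it is $O(\log\log\tfrac{B}{\delta})$; and the second line contributes $\tfrac{1-\epsilon}{C_{\text{BAWGN}}(q^{\ast},q^{\ast}B\sigma^2)}-\tfrac1{1-\eta}$. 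As this holds for every small $\eta>0$, letting $\eta\downarrow 0$ gives \eqref{eq:delta_gain} (as an inequality on the limit inferior, which is all the lower bound can deliver).

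For \eqref{eq:B_NA}, Lemma~\ref{lemma:converse_k_1} gives $\min_{\mathfrak{c}_\epsilon\in\mathcal{C}^{NA}_\epsilon}\expe[\tau]\ge\tfrac{(1-\epsilon)\log(B/\delta)}{C_{\text{BAWGN}}(q^{\ast},q^{\ast}B\sigma^2)}-o\bigl(\tfrac{B}{\delta}\log\tfrac{B}{\delta}\bigr)$, the subtracted term being the $h(\epsilon)$-correction (which is only $\Theta(B)$). Using $C_{\text{BAWGN}}(q^{\ast},q^{\ast}B\sigma^2)\le\tfrac{\log e}{2B\sigma^2}$ and dividing by $\tfrac{B}{\delta}\log\tfrac{B}{\delta}$, the right-hand side tends to $\tfrac{2(1-\epsilon)\sigma^2\delta}{\log e}\ge\tfrac{(1-\epsilon)\sigma^2\delta}{\log e}$, which is \eqref{eq:B_NA}. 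For \eqref{eq:B_A} I invoke Lemma~\ref{lemma:achv}: its two-stage adaptive strategy, recursed into a multi-scale zoom that keeps the probed width proportional to the current uncertainty, has expected length that — divided by the stated normalization — vanishes as $B\to\infty$ with $\delta$ fixed, because after confinement to a coarse window the refinement phases run at a noise variance collapsing with that width; letting $B\to\infty$ gives \eqref{eq:B_A}. Finally \eqref{eq:B_gain} follows from $\tfrac{\min_{NA}\expe[\tau]-\min_A\expe[\tau']}{\tfrac{B}{\delta}\log\tfrac{B}{\delta}}=\tfrac{\min_{NA}\expe[\tau]}{\tfrac{B}{\delta}\log\tfrac{B}{\delta}}-\tfrac{\min_A\expe[\tau']}{\tfrac{B}{\delta}\log\tfrac{B}{\delta}}$ together with \eqref{eq:B_NA} and the fact that $\min_A\expe[\tau']=o\bigl(\tfrac{B}{\delta}\log\tfrac{B}{\delta}\bigr)$.

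\textbf{Main obstacle.} The delicate point is not any single limit but the bookkeeping that ties $\eta\downarrow 0$ to the growth of $\tfrac{B}{\delta}$: in Theorem~\ref{thm:gain_lower_bound} every capacity sits in a denominator, so in the $B\to\infty$ regime $\eta$ must be driven to $0$ faster than $C_{\text{BAWGN}}(q^{\ast},q^{\ast}B\sigma^2)\sim 1/B$ for the bound to stay meaningful, and $a_\eta$ — the solution of the transcendental equation in the statement — must then be shown to grow slowly enough that the $h$-term is still negligible after normalization; in the $\delta\downarrow 0$ regime one needs an explicit rate for $C_{\text{BAWGN}}(\tfrac12,\tfrac{\alpha B\sigma^2}{2})\to 1$ to confirm the $\log\log$-type residuals are genuinely $o(\log\tfrac{B}{\delta})$. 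Once these endpoint rates for $C_{\text{BAWGN}}$ and the growth rate of $a_\eta$ as $\eta\downarrow0$ are pinned down, the rest is routine.
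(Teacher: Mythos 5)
Your overall route is the paper's: the paper also fixes $\alpha=1/\log\frac{B}{\delta}$, obtains \eqref{eq:delta_gain} from $C_{\text{BAWGN}}(\tfrac12,\tfrac{\alpha B\sigma^2}{2})\to 1$, and obtains \eqref{eq:B_gain}--\eqref{eq:B_A} from Lemma~\ref{lemma:converse_k_1} together with $C_{\text{BAWGN}}(q^{\ast},q^{\ast}B\sigma^2)\le\frac{\log e}{B\sigma^2}$. Your handling of \eqref{eq:delta_gain} and \eqref{eq:B_NA} is correct and essentially identical to the paper's. Two concrete gaps remain. First, for \eqref{eq:B_A} --- and for the step $\min_{\mathfrak{c}'_{\epsilon}\in\mathcal{C}^{A}_{\epsilon}}\expe[\tau']=o\bigl(\tfrac{B}{\delta}\log\tfrac{B}{\delta}\bigr)$ that you need to pass from \eqref{eq:B_NA} to \eqref{eq:B_gain} --- you appeal to a ``multi-scale zoom'' recursion of the two-stage strategy that is never defined and is not covered by Lemma~\ref{lemma:achv}. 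No recursion is needed: the paper gets this directly from Lemma~\ref{lemma:achv} with the same $\alpha=1/\log\frac{B}{\delta}$ by invoking the quantitative low-SNR \emph{lower} bound $C_{\text{BAWGN}}\bigl(\tfrac12,\alpha(\tfrac{B}{\delta})B\sigma^2\bigr)\ge\frac{\log(B/\delta)}{16B\sigma^2}\bigl(1-\frac{\log(B/\delta)}{16B\sigma^2}\bigr)$, which makes the second-stage cost $O(B\sigma^2)$ and the first-stage cost $O(B\sigma^2\log\log\tfrac{B}{\delta})$. Your proposal records only the matching upper bound on the capacity, and an upper bound on a quantity appearing in a denominator cannot control the achievability side; you must supply a lower bound of this type.

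Second, you explicitly defer the $\eta$--$a_{\eta}$ bookkeeping in the $B\to\infty$ regime, and that is precisely where the remaining work lies: since $C_{\text{BAWGN}}(q^{\ast},q^{\ast}B\sigma^2)=\Theta(1/B)$ sits in the denominator as $C-\eta$, one must exhibit a choice of $a$ (the paper commits to $a_{\eta}=\log\log\frac{B}{\delta}$) for which simultaneously $\eta=\frac{a}{a-3}\psi_{\frac{B}{\delta}}(a-3)=o(1/B)$ and $a=o(\log\frac{B}{\delta})$; this needs the Gaussian (sub-exponential) decay of $\psi_{\frac{B}{\delta}}$, not merely the generic polynomial bound $\psi_{\frac{B}{\delta}}(a)\le a^{-\gamma}\xi_{\frac{B}{\delta}}$, which decays far too slowly at $a=\log\log\frac{B}{\delta}$ to beat $1/B$. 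Declaring this ``routine once the endpoint rates are pinned down'' is not a proof; it is the proof. As a side remark, the first-stage numerator in Lemma~\ref{lemma:achv} is bounded below by the constant $\log\frac{2}{\epsilon}$ while its denominator is $\Theta(1/B)$, so any strategy built on that lemma costs $\Omega(B)$; consequently the limit \eqref{eq:B_A} with normalization $\frac{B}{\delta}$ alone cannot follow from this construction --- what your argument (and the paper's) actually delivers is $o\bigl(\tfrac{B}{\delta}\log\tfrac{B}{\delta}\bigr)$, which is all that \eqref{eq:B_gain} requires.
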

The proof of the above corollary is provided in~Appendix-C.

\begin{remarks}
The above corollary characterizes the two qualitatively different regimes previously discussed. For fixed $B$, as $\delta$ goes to zero the asymptotic adaptivity gain scales as only $\log \frac{B}{\delta}$, whereas for fixed $\delta$, as $B$ increases the asymptotic adaptivity gain scales as $\frac{B}{\delta} \log \frac{B}{\delta}$. In other words, target acquisition rate improves by a constant for fixed $B$ as $\delta$ decreases while it grows linearly with $B$ for a fixed $\delta$. In other words, adaptivity provides a larger gain in target acquisition rate for the regime where the total search width is growing than in the case where we fix the total width and shrink the location widths. In Section~\ref{sec:num_results} we related this phenomenon to the diminishing capacity of BAWGN channel when the total noise $\frac{B\sigma^2}{2}$ grows.
\end{remarks}

Next we provide the main technical components of the proof of Theorem~\ref{thm:gain_lower_bound}.

\subsection{Converse: Non-Adaptive Search Strategies}

\begin{lemma}
\label{lemma:converse_k_1}

The minimum expected number of measurements required for any $\epsilon$-reliable non-adaptive search strategy can be lower bounded as
\begin{align*}
\min_{\mathfrak{c}_{\epsilon} \in \mathcal{C}^{NA}_{\epsilon}}\expe_{\mathfrak{c}_{\epsilon}}[\tau] \geq \frac{(1-\epsilon) \log \left(\frac{B}{\delta}\right) -h(\epsilon)}{C_{\text{BAWGN}}\left(q^{\ast}, q^{\ast} B \sigma^2 \right)}.
\end{align*}
\end{lemma}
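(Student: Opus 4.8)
The plan is to convert any $\epsilon$-reliable non-adaptive strategy into a statement about mutual information and then invoke Fano's inequality, exactly in the spirit of the converse to the channel coding theorem but carried out at the level of the expected (possibly random) stopping time $\tau$. First I would condition on the realized measurement matrix $\overline{\mbf S}^\tau$ (which, for a non-adaptive strategy, is chosen offline and in particular is independent of $\mbf W$). Given $\overline{\mbf S}^\tau$, the channel from $\mbf W$ to $\mbf Y^\tau$ is memoryless with the $n$-th transition law $\P(Y_n \mid X_n, \mbf S_n) = \mathcal N(X_n, |\mbf S_n|\delta\sigma^2)$, where $X_n = \mbf W^\intercal \mbf S_n$. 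Hence $I(\mbf W; \mbf Y^\tau \mid \overline{\mbf S}^\tau) \le \sum_{n=1}^\tau I(X_n; Y_n \mid \mbf S_n)$, and each term is at most the BAWGN mutual information evaluated at the input distribution induced by $\mbf W$ (uniform over locations) restricted to the coordinates probed by $\mbf S_n$. The key single-letter bound I would establish is that for every fixed binary vector $\mbf s$ with $|\mbf s| = k$,
\[
I(X; Y \mid \mbf S = \mbf s) \le C_{\text{BAWGN}}\!\left(q^\ast, q^\ast B\sigma^2\right),
\]
uniformly in $k$, where $q^\ast = \argmax_{q\in\mathcal I_{B/\delta}} C_{\text{BAWGN}}(q, qB\sigma^2)$. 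The point is that when $k$ locations are probed, $X$ is $\ber(k\delta/B)$ and the noise variance is $k\delta\sigma^2$; writing $q = k\delta/B$, the per-measurement information is exactly $C_{\text{BAWGN}}(q, qB\sigma^2)$ maximized over the admissible input fraction, so the worst case over $k$ is $C_{\text{BAWGN}}(q^\ast, q^\ast B\sigma^2)$. I would need to argue the maximization is only over $q$ of the form $k\delta/B$, i.e. over the grid $\mathcal I_{B/\delta}$, which is why that set appears.

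Next I would combine this with Fano's inequality: since $\overline{\Pe}\le\epsilon$ and $\mbf W$ is uniform over $B/\delta$ values, $H(\mbf W \mid \mbf Y^\tau, \overline{\mbf S}^\tau) \le h(\epsilon) + \epsilon\log\frac{B}{\delta}$, hence
\[
(1-\epsilon)\log\tfrac{B}{\delta} - h(\epsilon) \le I(\mbf W; \mbf Y^\tau \mid \overline{\mbf S}^\tau) \le \expe[\tau]\, C_{\text{BAWGN}}(q^\ast, q^\ast B\sigma^2).
\]
Rearranging gives the claimed bound. The subtlety to handle carefully is that $\tau$ may be a random (data-dependent) stopping time; the chain-rule bound $I(\mbf W; \mbf Y^\tau \mid \overline{\mbf S}^\tau) \le \sum_n \P(\tau \ge n)\, I(X_n; Y_n \mid \mbf S_n) \le \expe[\tau]\,C_{\text{BAWGN}}(q^\ast, q^\ast B\sigma^2)$ needs a short martingale/optional-stopping or Wald-type argument, exploiting that for a non-adaptive strategy $\{\tau \ge n\}$ is measurable with respect to information that does not create feedback dependence on $\mbf W$ — this is precisely where non-adaptivity is used and where adaptivity would break the simple bound.

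I expect the main obstacle to be this last step: making the single-letterization rigorous for a variable-length, non-adaptive code. One clean route is to note that for a non-adaptive strategy even the stopping rule cannot depend on $\mbf W$ beyond what $\mbf Y^\tau$ reveals, so one can first prove the bound for each fixed length $n$ and then take expectations, or equivalently appeal to the known converse for variable-length coding over a memoryless channel (a Burnashev-type or simply a Wald-inequality argument) with the per-use information capped by $C_{\text{BAWGN}}(q^\ast, q^\ast B\sigma^2)$. Everything else — the per-letter bound and Fano — is routine. A secondary but genuinely important point is verifying that $C_{\text{BAWGN}}(q, qB\sigma^2)$ is indeed the right single-letter quantity: one must check that probing $k$ locations with a uniform target indeed yields input fraction $q = k\delta/B$ and noise variance $qB\sigma^2 = k\delta\sigma^2$, and that no choice of $\mbf s$ can do better than optimizing $q$ over the grid, which is immediate once the reduction to a scalar BAWGN channel per measurement is in place.
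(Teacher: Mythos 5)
Your proposal follows essentially the same route as the paper's proof: Fano's inequality combined with single-letterization of $I(\mbf{W};\mbf{Y}^\tau)$ into per-measurement terms, each equal to $C_{\text{BAWGN}}(q_i, q_i B\sigma^2)$ because non-adaptivity makes $X_i \sim \ber(q_i)$ with noise variance $q_i B\sigma^2$, and then maximizing over $q$ on the grid $\mathcal{I}_{B/\delta}$. Your additional care with the random stopping time $\tau$ (via a Wald-type argument) is a refinement the paper's appendix glosses over by writing a fixed $n$, but it does not change the argument in any essential way.
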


Proof of the Lemma~\ref{lemma:converse_k_1} is provided in~Appendix-A. The proof follows from the fact that clean signal $X_i$ and noise $Z_i$ are independent over time and independent of  past observations for $i = 1,2, \ldots, n$,  due to the non-adaptive nature of the search strategy. In the absence of information from past observation outcomes, the agent tries to maximize the mutual information $I(X_i, Y_i)$ at every measurement. Since $X_i \sim \ber(q_i)$ and $Z_i \sim \mathcal{N}(0, q_i  B\sigma^2)$, the mutual information $I(X_i, Y_i) = C_{\text{BAWGN}}\left(q_i, q_i B \sigma^2 \right)$ is maximized at $q_i = q^{\ast}$.

\subsection{Achievability: Adaptive Search Strategy}

Consider the following two stage search strategy.

\subsubsection{First Stage (Fixed Composition Strategy $\mathfrak{c}^{1}_{\frac{\epsilon}{2}}$)} We group the $\frac{B}{\delta}$ locations of width $\delta$ into $\frac{1}{\alpha}$ sections of width $\alpha B$. Let $\mbf{W}^{\prime}$ denote the true location of the target among the sections of width $ \alpha B $. Now, we use a non-adaptive strategy to search for the target location among $\frac{1}{\alpha}$ sections of width $\alpha B$. In particular, we use a fixed composition strategy where at every time instant $n$, the fraction of total locations probed is fixed to be $q^{\ast}$. In other words, the measurement vector $\mbf{S}^{\prime}_n$ at every instant $n$ is picked uniformly randomly from the set of measurement vectors $\{\mbf{S}^{\prime} \in \mathcal{U}_{\frac{1}{\alpha}}: |\mbf{S}^{\prime}| = \lfloor \frac{q^{\ast}}{\alpha} \rfloor \}$. For the ease of exposition, we assume that $\frac{q^{\ast}}{\alpha}$ is an integer.  Hence, for this strategy, at every $n$, $X_n \sim \ber(q^{\ast})$ and $Z_n \sim \mathcal{N}(0, q^{\ast}B\sigma^2)$. For all $i \in \{1,2, \ldots, \frac{1}{\alpha} \}$, let $\mbs{\rho}^{\prime}_n(i)$ be the posterior probability of the estimate $\hat{\mbf{W}}^{\prime}(i) = 1$ after reception of $\mbf{Y}^{n-1}$, i.e.,  $\mbs{\rho}^{\prime}_n(i): = \P \left( \hat{\mbf{W}}^{\prime}(i) = 1| \mbs{Y}^{n-1} \right)$ and let $\mbs{\rho}^{\prime}_n: = \left\{\mbs{\rho}^{\prime}_n(1), \mbs{\rho}^{\prime}_n(2), \ldots, \mbs{\rho}^{\prime}_{n}\left(\frac{1}{\alpha}\right) \right\}$. Assume that agent begins with a uniform probability over the $\frac{1}{\alpha}$ sections, i.e., $\mbs{\rho}^{\prime}_0 = \{\alpha, \alpha, \ldots, \alpha \}$. The  posterior probability $\mbs{\rho}^{\prime}_{n+1}(i)$ at time $n+1$ when $Y_n = y$ is obtained by the following Bayesian update:
\begin{align}
\label{eq:rho_update1}
\mbs{\rho}^{\prime}_{n+1}(i)
= 
\left\{
\begin{array}{ll}
\frac{\mbs{\rho}^{\prime}_{n}(i) G(y; 1, q^{\ast}B\sigma^2)}{\mathcal{D}^{\prime}_n} & \text{if } \mbs{S}^{\prime}_n(i) = 1,\\
\frac{\mbs{\rho}^{\prime}_{n}(i) G(y; 0, q^{\ast}B\sigma^2)}{\mathcal{D}^{\prime}_n} & \text{if } \mbs{S}^{\prime}_n(i) = 0,
\end{array}
\right.
\end{align}
where
\begin{align}
\label{eq:rho_norm1}
\mathcal{D}^{\prime}_n 
= \sum_{j: \mbf{1}_{\{\mbs{S}_n(j) = 1\}}}\mbs{\rho}^{\prime}_{n}(j) G(y; 1, q^{\ast}B\sigma^2)
+ \sum_{j: \mbf{1}_{\{\mbs{S}_n(j) = 0\}}}\mbs{\rho}^{\prime}_{n}(j) G(y; 0, q^{\ast}B\sigma^2).
\end{align}

Let $\tau^{1} : = \inf\left\{n: \max_{i} \mbs{\rho}^{\prime}_n(i) \geq 1- \frac{\epsilon}{2} \right\}$ be the number of measurements used under stage 1. Note that $\tau^{1}$ is a random variable. Hence, first stage is a non-adaptive variable length strategy. Now, the expected stopping time $\expe_{\mathfrak{c}^{1}_{\frac{\epsilon}{2}}}[\tau^{1}]$ can be upper bounded using Lemma~\ref{lemm:stage_1_time} from Appendix-B. 

\subsubsection{Second Stage (Sorted Posterior Matching Strategy $\mathfrak{c}_{\frac{\epsilon}{2}}^2$)} In the second stage, the agent zooms into the $\alpha B$ width section obtained from the first stage and uses an adaptive strategy to search only within this $\alpha B$ section. The agent searches for the target location of width $\delta$ among the remaining $\frac{\alpha B}{\delta}$ locations. In particular, we use the sorted posterior matching strategy proposed in~\cite{SungEnChiu} which we describe next. Let $\mbf{W}^{\prime \prime}$ denote the true target location of width $\delta$. For all $i \in \{1,2, \ldots, \frac{\alpha B}{\delta} \}$, let $\mbs{\rho}^{\prime \prime}_n(i)$ be the posterior probability of the estimate $\hat{\mbf{W}}^{\prime \prime}(i) = 1$ after reception of $\mbf{Y}^{n-1}$, i.e., $\mbs{\rho}^{\prime}_n(i): = \P \left( \hat{\mbf{W}}^{\prime \prime}(i) = 1| \mbf{Y}^{n-1} \right)$ and let $\mbs{\rho}^{\prime \prime}(n): = \{\mbs{\rho}^{\prime \prime}_n(1), \mbs{\rho}^{\prime \prime}_n(2), \ldots, \mbs{\rho}^{\prime \prime}_{n}\left(\frac{\alpha B}{\delta}\right) \}$. Assume the agent begins with a uniform probability over the $\frac{\alpha B}{\delta}$ sections, i.e., $\mbs{\rho}^{\prime \prime}_0 = \left\{\frac{\delta}{\alpha B}, \frac{\delta}{\alpha B}, \ldots, \frac{\delta}{\alpha B} \right\}$. At every time instant $n$, we sort the posterior values in descending order to obtain the sorted posterior vector $\mbs{\rho}^{\downarrow}_n$. Let vector $I_n$ denote the corresponding ordering of the location indices in the new sorted posterior. Define
\begin{align}
k^{\ast}_n:=  \argmin_{i} \left| \sum_{j = 1}^{i} \mbs{\rho}^{\downarrow}_n(j)- \frac{1}{2}\right|.
 \end{align}
We choose the measurement vector $\mbf{S}_n^{\prime \prime}$ such that $\mbf{S}_n^{\prime \prime}(j) = 1$ if and only if $j \in \{I_n(1),I_n(2), \ldots, I_n(k^{\ast}_n)\}$. Note that for this strategy, at every $n$, the noise is $Z_n \sim \mathcal{N}(0, |\mbf{S}^{\prime \prime}_n|\delta \sigma^2)$ and the worst noise intensity is $\mathcal{N}(0, \frac{\alpha B \sigma^2}{2})$. The  posterior probability $\mbs{\rho}^{\prime \prime}_{n+1}(i)$ at time $n+1$ when $Y_n = y$ is obtained by the following Bayesian update:
\begin{align}
\label{eq:rho_update2}
\mbs{\rho}^{\prime \prime}_{n+1}(i)
= 
\left\{
\begin{array}{ll}
\frac{\mbs{\rho}^{\prime \prime}_{n}(i) G(y; 1, |\mbf{S}^{\prime \prime}_n|\delta \sigma^2)}{\mathcal{D}^{\prime \prime}_n} & \text{if } \mbs{S}^{\prime \prime}_n(i) = 1,\\
\frac{\mbs{\rho}^{\prime \prime}_{n}(i) G(y; 0, |\mbf{S}^{\prime \prime}_n|\delta \sigma^2)}{\mathcal{D}^{\prime \prime}_n} & \text{if } \mbs{S}^{\prime \prime}_n(i) = 0,
\end{array}
\right.
\end{align}
where
\begin{align}
\label{eq:rho_norm2}
\mathcal{D}^{\prime \prime}_n 
= \sum_{j: \mbf{1}_{\{\mbs{S}_n(j) = 1\}}}\mbs{\rho}^{\prime \prime}_{n}(j) G\left(y; 1,|\mbf{S}^{\prime \prime}_n|\delta \sigma^2\right)
+ \sum_{j: \mbf{1}_{\{\mbs{S}_n(j) = 0\}}}\mbs{\rho}^{\prime \prime}_{n}(j) G\left(y; 0, |\mbf{S}^{\prime \prime}_n|\delta \sigma^2\right).
\end{align}
Let $\tau^{2} : = \inf\left\{n: \max_{i} \mbs{\rho}^2_n(i) \geq 1- \frac{\epsilon}{2} \right \}$ be the number of measurements used under stage 2. Note that $\tau^{2}$ is a random variable. Hence, the second stage is an adaptive variable length strategy. The expected number of measurements $\expe_{\mathfrak{c}^{2}_{\frac{\epsilon}{2}}}[\tau^{\prime \prime}]$ can be upper bounded using Lemma~\ref{lemm:sortPM_tau} from Appendix-B.

Noting that the total probability of error of the two stage search strategy is less than $\epsilon$ and that the expected stopping time is $\expe_{\mathfrak{c}^{\prime}_{\epsilon}}[\tau^{\prime}] = \expe_{\mathfrak{c}^1_{\frac{\epsilon}{2}}}[\tau^{1}]+ \expe_{\mathfrak{c}^2_{\frac{\epsilon}{2}}}[\tau^{2}]$, we have the assertion of the following lemma.

\begin{lemma}
\label{lemma:achv}
The minimum expected number of measurements required for the above $\epsilon$-reliable adaptive search strategy $\mathfrak{c}^{\prime}_{\epsilon}$ can be upper bounded as
\begin{align}
\expe_{\mathfrak{c}^{\prime}_{\epsilon}}[\tau^{\prime}]
\leq 
\min_{\alpha \in \mathcal{I}_{\frac{B}{\delta}}}\left\{\frac{\log \frac{1}{\alpha} + \log \frac{2}{\epsilon} + \log \log \frac{1}{\alpha} + a_{\eta}}{C_{\text{BAWGN}}\left(q^{\ast}, q^{\ast} B \sigma^2 \right) - 
\eta}
+
\frac{\log \frac{\alpha B}{\delta} + \log \frac{2}{\epsilon} + \log \log \frac{\alpha B}{\delta} + a_{\eta}}{C_{\text{BAWGN}}\left(\frac{1}{2}, \frac{\alpha B \sigma^2}{2} \right) - 
\eta} \right\}.
\end{align}
\end{lemma}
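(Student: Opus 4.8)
The plan is to bound separately the expected number of measurements used in each of the two stages and then add them, invoking Lemma~\ref{lemm:stage_1_time} and Lemma~\ref{lemm:sortPM_tau} from Appendix-B for the respective stages. First I would analyze Stage~1, the fixed-composition non-adaptive strategy $\mathfrak{c}^{1}_{\frac{\epsilon}{2}}$ searching among the $\frac{1}{\alpha}$ coarse sections of width $\alpha B$. Because every measurement probes exactly the fraction $q^{\ast}$ of the sections, we have $X_n \sim \ber(q^{\ast})$ and $Z_n \sim \mathcal{N}(0, q^{\ast} B \sigma^2)$ for all $n$, so each measurement carries mutual information $C_{\text{BAWGN}}(q^{\ast}, q^{\ast} B \sigma^2)$. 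The stopping rule $\tau^{1} = \inf\{n : \max_i \bm{\rho}'_n(i) \geq 1 - \frac{\epsilon}{2}\}$ is a sequential hypothesis test over $\frac{1}{\alpha}$ hypotheses with uniform prior; by a standard renewal/Wald-type argument on the log-likelihood ratio (the content of Lemma~\ref{lemm:stage_1_time}), the expected stopping time is bounded by the log-number of hypotheses plus the log of the confidence level plus a residual term, all divided by the per-step information rate discounted by $\eta$. This yields
\begin{align*}
\expe_{\mathfrak{c}^{1}_{\frac{\epsilon}{2}}}[\tau^{1}]
\leq
\frac{\log \frac{1}{\alpha} + \log \frac{2}{\epsilon} + \log \log \frac{1}{\alpha} + a_{\eta}}{C_{\text{BAWGN}}\left(q^{\ast}, q^{\ast} B \sigma^2 \right) - \eta}.
\end{align*}

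Next I would analyze Stage~2, the sorted Posterior Matching strategy $\mathfrak{c}^{2}_{\frac{\epsilon}{2}}$ restricted to the $\frac{\alpha B}{\delta}$ fine locations inside the section identified in Stage~1. The key structural fact is that the sort-PM rule always splits the posterior mass as evenly as possible, so the weight of the measurement vector never exceeds half of $\frac{\alpha B}{\delta}$ locations, giving a worst-case noise intensity $\mathcal{N}(0, \frac{\alpha B \sigma^2}{2})$ and hence a per-step information rate at least $C_{\text{BAWGN}}(\frac{1}{2}, \frac{\alpha B \sigma^2}{2})$. Lemma~\ref{lemm:sortPM_tau} then gives, by the same type of martingale/renewal bound on the log-posterior of the true location,
\begin{align*}
\expe_{\mathfrak{c}^{2}_{\frac{\epsilon}{2}}}[\tau^{2}]
\leq
\frac{\log \frac{\alpha B}{\delta} + \log \frac{2}{\epsilon} + \log \log \frac{\alpha B}{\delta} + a_{\eta}}{C_{\text{BAWGN}}\left(\frac{1}{2}, \frac{\alpha B \sigma^2}{2} \right) - \eta}.
\end{align*}
Finally, since Stage~1 is run at reliability $\frac{\epsilon}{2}$ and Stage~2 at reliability $\frac{\epsilon}{2}$, a union bound gives total error at most $\epsilon$, so the composite two-stage strategy is $\epsilon$-reliable; and $\tau^{\prime} = \tau^{1} + \tau^{2}$ by construction, so $\expe_{\mathfrak{c}^{\prime}_{\epsilon}}[\tau^{\prime}] = \expe_{\mathfrak{c}^{1}_{\frac{\epsilon}{2}}}[\tau^{1}] + \expe_{\mathfrak{c}^{2}_{\frac{\epsilon}{2}}}[\tau^{2}]$. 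Adding the two displayed bounds and taking the minimum over admissible $\alpha \in \mathcal{I}_{\frac{B}{\delta}}$ — which is legitimate because $\alpha$ is a free design parameter of the strategy — yields the claimed bound.

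I expect the main obstacle to be the careful handling of the residual/overshoot terms $a_{\eta}$ and the $\log \log(\cdot)$ factors, i.e., controlling how much the log-likelihood ratio can overshoot the threshold $1 - \frac{\epsilon}{2}$ in a single step when the noise variance is large. This is exactly where the unbounded support of the Gaussian observation matters: unlike the Bernoulli case, the single-step increment of the log-posterior is an unbounded random variable, so one must truncate (the $[\cdot]_{a-3}$ operation in the definition of $a_{\eta}$) and bound the contribution of the tail, which is the technical heart of Lemmas~\ref{lemm:stage_1_time} and~\ref{lemm:sortPM_tau}. Granting those lemmas, the proof of Lemma~\ref{lemma:achv} itself is just the additive bookkeeping and the union bound on error sketched above.
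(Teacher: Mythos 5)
Your proposal is correct and follows essentially the same route as the paper: bound each stage separately via the appendix lemmas (the drift bound of Lemma~\ref{lemm:stage_1_time} together with its stopping-time corollary for Stage~1, and Lemma~\ref{lemm:sortPM_tau} for Stage~2), use the union bound $\frac{\epsilon}{2}+\frac{\epsilon}{2}=\epsilon$ for reliability, add the expected stopping times, and optimize over the free parameter $\alpha$. You also correctly identify that the only real technical content lies in the truncation/overshoot control (the $[\cdot]_{a-3}$ operation defining $a_{\eta}$) inside those appendix lemmas, which is exactly where the paper puts it.
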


\begin{remarks}
For an $\epsilon$-reliable adaptive search strategy $\mathfrak{c}^{\prime}_{\epsilon} \in \mathcal{C}^{A}_{\epsilon}$ using the two stage strategy, the non-asymptotic upper bound provided by Lemma~\ref{lemma:achv} for $\min_{\mathfrak{c}^{\prime}_{\epsilon} \in \mathcal{C}_{\epsilon}^{A}} \expe^{\prime}_{\mathfrak{c}_{\epsilon}}[\tau^{\prime}]$  is tighter than the upper bound provided in~\cite{SungEnChiu} using the sorted posterior matching strategy. In fact, for any given $\alpha$, our bound is significantly smaller than the upper bound in~\cite{SungEnChiu}. In the asymptotically dominating terms of the order $\log \frac{B}{\delta}$, our upper bound closely follows the simulations as illustrated in Section~\ref{sec:num_results}.
\end{remarks}

\begin{remarks}
In the regime of fixed $B$ and diminishing $\delta$, Lemma~\ref{lemma:achv} together with Corollary~\ref{cor:channelstate} establishes the optimality of our proposed algorithm. Further, it characterizes a lower bound on the increase in targeting capacity when utilizing an adaptive strategy over the non-adaptive strategies.
\end{remarks}

\section{Extensions and Generalizations}

\subsection{Generalization to other noise models}
The main results presented in this paper consider the setup where the noise $Z_n$ is distributed as $\mathcal{N}(0, |\mbf{S}_n|\delta \sigma^2)$. In other words, the variance of the noise given by $(|\textbf{S}_n|\delta \sigma^2)$ is a linear function of the size of a measurement vector $|\mbf{S}_n|$. This model assumption holds when each target location adds noise equally and independently of other locations when probed together. In general, due to correlation across locations the additive noise variance can be assumed to scale as a non-decreasing function $f(\cdot)$ of the measurement vector $|\mbf{S}_n|$. In this section, we extend our model to a general formulation for the noise $Z_n \sim \mathcal{N}(0, f(|\mbf{S}_n|)\delta \sigma^2)$, where $f(\cdot)$ is a non-decreasing function of $|\mbf{S}_n|$. For example, $f(\textbf{S}_n) = |\mbf{S}_n|^{\gamma}$ for some $\gamma > 0$. Figure~\ref{fig:capacity}, shows that the effect of the noise function $f(|\mbf{S}_n|)$ on the capacity. 
 \begin{figure}[!htb]
     \centering
     \includegraphics[width=0.7\textwidth]{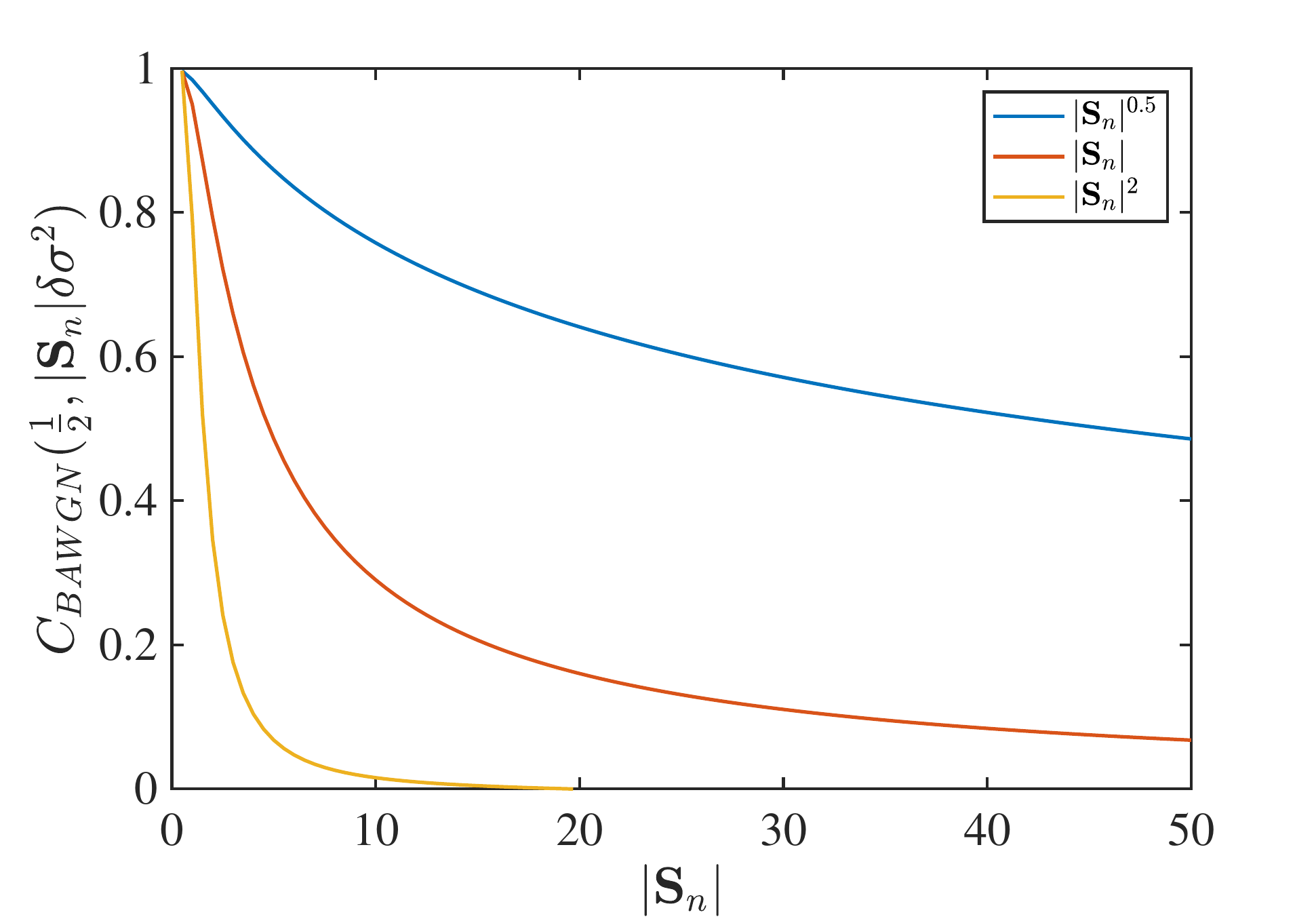}
     \caption{Behavior of capacity of BAWGN channel with $\sigma^2=0.25$ over a total search region of width $B=10$, location width $\delta =0.1$, as a function of the size of a measurement $|S_n|$.} 
       \label{fig:capacity}
 \end{figure}

\begin{theorem}
\label{thm:fgain_lower_bound}
Let $\epsilon \in (0,1)$ and let $f(\cdot)$ be a non-decreasing function. For any $\epsilon$-reliable non-adaptive strategy $\mathfrak{c}_{\epsilon} \in \mathcal{C}^{NA}_{\epsilon}$ searching over a search region of width $B$ among $\frac{B}{\delta}$ locations with $\tau$ number of measurements, there exists an $\epsilon$-reliable adaptive strategy $\mathfrak{c}^{\prime}_{\epsilon} \in \mathcal{C}^{A}_{\epsilon}$ with $\tau^{\prime}$ number of measurements, such that for some small constant $\eta > 0$ the following holds
\begin{align*}
\expe_{\mathfrak{c}_{\epsilon}}[\tau] - \expe_{\mathfrak{c}^{\prime}_{\epsilon}}[\tau^{\prime}]
&
\geq \left(\max_{\alpha \in \mathcal{I}_{\frac{B}{\delta}}} \left\{ 
\log \frac{1}{\alpha} \left(\frac{(1-\epsilon) }{C_{\text{BAWGN}}(q^{\ast}, f(\frac{q^{\ast} B}{\delta})\delta \sigma^2)}  \right. \right.  -\frac{1}{C_{\text{BAWGN}} \left( q^{\ast}, f(\frac{q^{\ast} B}{\delta})\delta \sigma^2\right) -  \eta}
\right)
\nonumber
\\
&
\quad + \log \frac{\alpha B}{\delta} \left(\frac{(1-\epsilon) }{
C_{\text{BAWGN}}(q^{\ast}, f(\frac{q^{\ast} B}{\delta})\delta\sigma^2)} \right.
\nonumber
\\
& \left. \left. \left. \hspace{1 cm}  - \frac{1}{
C_{\text{BAWGN}} \left(\frac{1}{2}, f(\frac{\alpha B}{2\delta})\delta \sigma^2\right)-\eta} \right) \right\} \right) (1+o(1)),
\end{align*}
where
$
q^{*}
=
\argmax_{q \in \mathcal{I}_{\frac{B}{\delta}}} C_{\text{BAWGN}}(q, f(\frac{qB}{\delta})\delta\sigma^2),
$
and $o(1)$ goes to 0 as $\frac{B}{\delta} \to \infty$.
\end{theorem}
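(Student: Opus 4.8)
The plan is to follow the proof of Theorem~\ref{thm:gain_lower_bound} essentially line by line, the only substantive change being that the per-measurement channel is now a BAWGN channel whose noise variance is $f(|\mbf{S}_n|)\delta\sigma^2$ rather than $|\mbf{S}_n|\delta\sigma^2$. Concretely, I would establish an $f$-analogue of the non-adaptive converse in Lemma~\ref{lemma:converse_k_1} and an $f$-analogue of the two-stage achievability bound in Lemma~\ref{lemma:achv}, and then subtract them and optimize over $\alpha$. When $f(x)=x$ one has $f(q^{*}B/\delta)\delta\sigma^2 = q^{*}B\sigma^2$ and $f(\alpha B/(2\delta))\delta\sigma^2 = \alpha B\sigma^2/2$, so this generalization specializes to Theorem~\ref{thm:gain_lower_bound}; the only difference in the statement is that here the lower-order terms are absorbed into the multiplicative $1+o(1)$ instead of being tracked explicitly through an $h(\cdot)$ term.

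For the converse I would reuse the argument of Appendix-A verbatim. Non-adaptivity keeps $X_i$ and $Z_i$ independent of the past, so if a fraction $q_i$ of the locations is probed at time $i$ then $|\mbf{S}_i| = q_iB/\delta$ and $I(X_i;Y_i) = C_{\text{BAWGN}}\!\left(q_i, f(q_iB/\delta)\delta\sigma^2\right)$, which by definition of $q^{*}$ is at most $C_{\text{BAWGN}}\!\left(q^{*}, f(q^{*}B/\delta)\delta\sigma^2\right)$. The Fano/stopping-time step then yields
\[
\min_{\mathfrak{c}_\epsilon\in\mathcal{C}^{NA}_\epsilon}\expe_{\mathfrak{c}_\epsilon}[\tau] \;\ge\; \frac{(1-\epsilon)\log\frac{B}{\delta}-h(\epsilon)}{C_{\text{BAWGN}}\!\left(q^{*}, f(\tfrac{q^{*}B}{\delta})\delta\sigma^2\right)}.
\]
For achievability I would re-run the same two-stage scheme. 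In Stage~1, a fixed-composition measurement over $\tfrac1\alpha$ coarse sections probing the fraction $q^{*}$ covers $\tfrac{q^{*}}{\alpha}$ coarse sections, i.e.\ $\tfrac{q^{*}B}{\delta}$ fine locations, so the effective noise variance is $f(\tfrac{q^{*}B}{\delta})\delta\sigma^2$ and the $f$-analogue of Lemma~\ref{lemm:stage_1_time} bounds $\expe[\tau^1]$ by $\bigl(\log\tfrac1\alpha+\log\tfrac2\epsilon+\log\log\tfrac1\alpha+a_\eta\bigr)/\bigl(C_{\text{BAWGN}}(q^{*},f(\tfrac{q^{*}B}{\delta})\delta\sigma^2)-\eta\bigr)$ for an appropriate $f$-dependent constant $a_\eta$. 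In Stage~2, the sorted-PM rule keeps $|\mbf{S}^{\prime\prime}_n|=k^{*}_n\le \tfrac{\alpha B}{2\delta}$ up to rounding, so monotonicity of $f$ makes $f(\tfrac{\alpha B}{2\delta})\delta\sigma^2$ the worst noise variance and the $f$-analogue of Lemma~\ref{lemm:sortPM_tau} bounds $\expe[\tau^2]$ by $\bigl(\log\tfrac{\alpha B}{\delta}+\log\tfrac2\epsilon+\log\log\tfrac{\alpha B}{\delta}+a_\eta\bigr)/\bigl(C_{\text{BAWGN}}(\tfrac12,f(\tfrac{\alpha B}{2\delta})\delta\sigma^2)-\eta\bigr)$. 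Summing these and minimizing over $\alpha\in\mathcal{I}_{\frac{B}{\delta}}$ gives the $f$-analogue of Lemma~\ref{lemma:achv}.

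The last step is the subtraction. Splitting $\log\frac{B}{\delta}=\log\frac1\alpha+\log\frac{\alpha B}{\delta}$ in the converse bound and regrouping the coefficients of $\log\frac1\alpha$ and $\log\frac{\alpha B}{\delta}$ reproduces exactly the bracketed expression in Theorem~\ref{thm:fgain_lower_bound}; everything left over (the $\log\log$ terms, the $\log\frac2\epsilon$ terms, the $a_\eta$ terms, the $h(\epsilon)$ term, and the $\eta$-perturbations of the capacities) forms an additive error. Since the two leading terms grow like $\log\frac1\alpha$ and $\log\frac{\alpha B}{\delta}$ and both diverge as $\frac{B}{\delta}\to\infty$ while the collected error is of lower order, the difference equals the bracketed maximum times $1+o(1)$, with $o(1)\to 0$ as $\frac{B}{\delta}\to\infty$; a uniform version of this estimate over the relevant range of $\alpha$ is what lets the maximum be pulled outside the $1+o(1)$ factor.

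I expect the main obstacle to be re-verifying the Appendix-B stopping-time lemmas for a general non-decreasing $f$. The renewal/martingale argument controlling the sorted-PM stopping time $\tau^2$ relies on a uniform per-step information-gain lower bound, and one must check that replacing $|\mbf{S}_n|\delta\sigma^2$ by $f(|\mbf{S}_n|)\delta\sigma^2$ only reindexes the relevant capacity to its $f$-evaluated value without disturbing the drift estimates, and moreover that the $f$-dependent analogue of $a_\eta$ remains of lower order than $\log\frac{B}{\delta}$ even when $f$ is unbounded, i.e.\ when the capacities appearing in the denominators themselves vanish as $\frac{B}{\delta}\to\infty$. Monotonicity of $f$ is exactly the property that makes $|\mbf{S}^{\prime\prime}_n|\le\tfrac{\alpha B}{2\delta}$ the worst case in Stage~2 and is therefore the hypothesis doing the real work in the extension.
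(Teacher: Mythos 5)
Your proposal is correct and takes essentially the same approach as the paper: the paper provides no separate proof of Theorem~\ref{thm:fgain_lower_bound}, presenting it as the direct $f$-analogue of Theorem~\ref{thm:gain_lower_bound} obtained by combining $f$-versions of the converse in Lemma~\ref{lemma:converse_k_1} and the two-stage achievability in Lemma~\ref{lemma:achv}, which is precisely what you do. Your identification of monotonicity of $f$ as the hypothesis making $f(\tfrac{\alpha B}{2\delta})\delta\sigma^2$ the worst-case Stage-2 noise variance, and of the absorption of the lower-order terms into the multiplicative $1+o(1)$, matches the intended argument.
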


\subsection{Multiple Targets}
\label{generalsetup}

The problem formulation and the main results of this paper consider the special case when there exists a single stationary target. Suppose instead the agent aims to find the true location of $r$ unique targets quickly and reliably. Our problem formulation is easily extended to the general case where there may exist multiple targets. In our generalization to multiple targets under the linear noise model (\ref{eq:noisysearch}), the clean signal indicates the the number of targets present in the measurement vector $\textbf{S}_n$. In particular, let $\mbf{W}^{(i)} \in \mathcal{U}_{\frac{B}{\delta}}$ be such that $\mbf{W}^{(i)}(j) = 1$ if and only if $j$-th location contains the $i$-th target. Then, the noisy observation is given as
\begin{align}
Y_n = \sum_{i = 1}^{r} (\mbf{W}^{(i)})^{\intercal}\mbf{S}_n + Z_n,
\end{align}
where $Z_n \sim \mathcal{N}(0, |\mbf{S}_n|\delta \sigma^2)$. Setting $X_n^{(i)} = (\mbf{W}^{(i)})^{\intercal}\mbf{S}_n $ for $i \in [r]$, we have
\begin{align}
Y_n = \sum_{i = 1}^{r} X^{(i)}_n + Z_n.
\end{align}
The problem of searching for multiple targets is equivalent to the problem of channel coding over a Multiple Access Channel (MAC) with state and feedback~\cite{nancy_asilomar}. In other words, we can extend the Proposition~1, to channel coding over a MAC with state and feedback with the following constraints: (i) $\mbf{W}^{(i)}$ can be viewed as the message to be transmitted by the $i$-th transmitter, (ii) the measurement matrix $\overline{\mbf{S}}_n$ can be viewed as the common codebook shared by all the transmitters, and (iii) a search strategy dictates the evolution of the MAC state. The channel transition is then fixed by the channel state which is measurement dependent.

\textbf{Example $\mathbf{1^{\prime}}$} (Establishing initial access in mm-Wave communications).
In the deployment of mm-Wave links into a  cellular or 802.11 network, the base station needs to to quickly switch between users and accommodate multiple mobile clients.
In this setup at time $n$ the noisy observation, $Y_n$, is a function of multiple users in the network, in addition to a measurement dependent noise.

\textbf{Example $\mbs{2^{\prime}}$} (Spectrum Sensing for Cognitive Radio). Consider the problem of opportunistically searching for $r$ vacant subbands of bandwidth $\delta$ over a total bandwidth of $B$. In this problem we desire to locate $r$ stationary vacant subbands quickly and reliably, by making measurements over time. Here again the noise intensity depends on the number of subbands probed, $\mbf{S}_n$, at each time instant $n$.

Searching for multiple targets with measurement dependent noise is a significantly harder problem compared to a single target case and achievability strategies for this problem even in the absence of noise are far more complex~\cite{Bshouty_kTargetsweighing, Chang_kTargetsCode} .

\section{Numerical Results}

\label{sec:num_results}
In this section we provide numerical analysis.


\subsection{Comparing Search Strategies}
In this section, we numerically compare four strategies proposed in the literature. Besides the sort PM strategy $\mathfrak{c}_{\epsilon}^2$ and the optimal variable length non-adaptive strategy i.e., the fixed composition strategy $\mathfrak{c}_{\epsilon}^1$,  we also consider two noisy variants of the binary search strategy. The noisy binary search applied to our search proceeds by selecting $\mbf{S}_n$ as half the width of the previous search region $\mbf{S}_{n-1}$ with higher posterior probability. The first variant we consider is fixed length noisy binary search, resembles the adaptive iterative hierarchical search strategy~\cite{7460513}, where each measurement vector $\mbf{S}_n$ is used $\alpha_{\epsilon}(\mbf{S}_n)|\mbf{S}_n|$ times where $\alpha_{\epsilon}(\mbf{S}_n)$ is chosen such that entire search result in an $\epsilon$-reliable search strategy. The second variant is variable length noisy binary search where each measurement vector $\mbf{S}_n$ is used until in each search we obtain error probability less than $\epsilon_p:=\frac{\epsilon}{\log{B/\delta}}$. Table~I provides a quick summary of the search strategies.

\begin{table}[!htb]
\centering
\caption{Candidate Search Strategies}
\label{Table:Strategies}
\begin{tabular}{|l|l|}
\hline
 & \\
Strategies $\mathfrak{c}_{\epsilon} \in \mathcal{C}_{\epsilon}$& Description of $\mbf{S}_n$ selection \\
\hline
Variable Length Random & $\bullet$ Select $\mbf{S}_n$ s.t. $|\mbf{S}_n|  = \frac{q^{\ast}B}{\delta}$ \T \\
& as dictated by strategy $\mathfrak{c}_{\epsilon}^1$\\
\hline
Fixed Length Noisy Binary& $\bullet$ Select $\mbf{S}_n$ as dictated by \T\\
& binary search strategy \\
 &$\bullet$ Repeat $\alpha_{\epsilon}(\mbf{S}_n)|\mbf{S}_n|$ times\\
\hline
Variable Length Noisy Binary& $\bullet$ Select $\mbf{S}_n$ as dictated by \T\\
& binary search strategy \\
&$\bullet$ Repeat $\tau$ times s.t.\\ 
&\setlength{\thickmuskip}{0mu} $\tau  = \min \{n: \|\mbs{\rho}_n \|{_{\scalebox{0.5}{$\infty$}}} \geq 1- \epsilon_p\}$\\
\hline
Sorted Posterior Matching&$\bullet$ Select $\mbf{S}_n$ as dictated by \T\\
& Sort PM strategy $\mathfrak{c}_{\epsilon}^2$\\
\hline
\end{tabular}
\end{table}

Figure~\ref{fig:EN_strategies}, shows the performance of each $\epsilon$-reliable search strategy, when considering fixed parameters $B$, $\delta$, and $\epsilon$. We note that the fixed length noisy binary strategy performs poorly in comparison to the optimal non-adaptive strategy. This shows that randomized non-adaptive search strategies such as the one considered in~\cite{Abari_AgileLink} perform better than both exhaustive search and iterative hierarchical search strategy. In particular, it performs better than variable length noisy binary search since when SNR is high since each measurement is repeated far too many times in order to be $\epsilon$-reliable. The performance of the optimal fully adaptive variable length strategies sort PM~\cite{SungEnChiu} is superior to all strategies even in the non-asymptotic regime.

\begin{figure}[ht]
     \centering
     \includegraphics[width=0.7\textwidth]{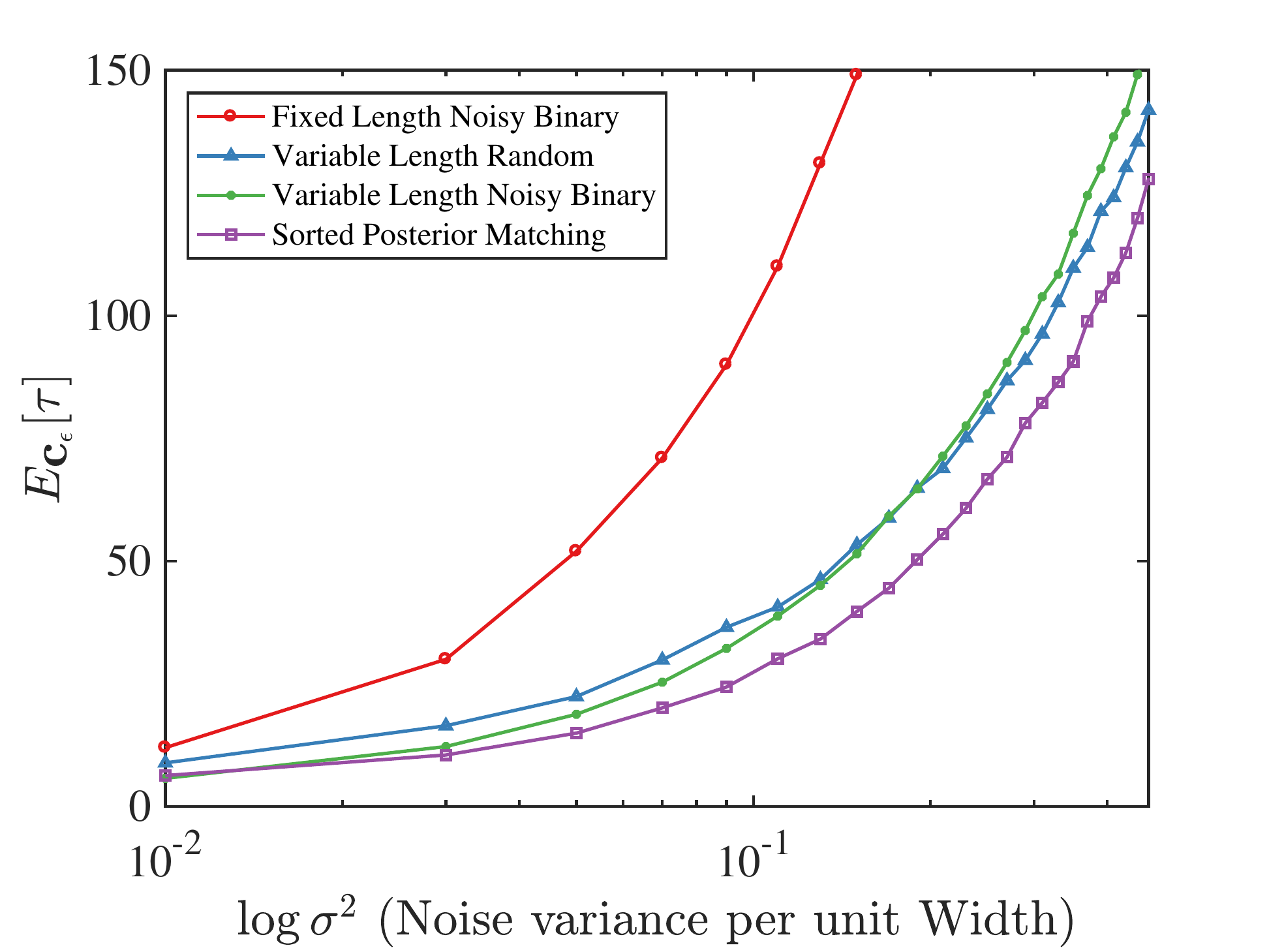}
     \caption{$\mathbb{E}_{\mathfrak{c}_{\epsilon}}[\tau]$ with $\epsilon = 10^{-4}$, $B=16$, and $\delta=1$, as a function of $\sigma^2$ for various strategies.} 
 \label{fig:EN_strategies}
 \end{figure}

\subsection{Two Distinct Regimes of Operation}
\label{sect:mainsimulations}
In this section, for a fixed $\sigma^2$ we are interested in the expected number of measurements required $\mathbb{E}_{\mathfrak{c}_{\epsilon}}[\tau]$ by an $\epsilon$-reliable strategy $\mathfrak{c}_{\epsilon}$, in the following two regimes: varying $\delta$ while keeping $B$ fixed, and varying $B$ while keeping $\delta$ fixed. Figures~\ref{fig:EN_varyB} and~\ref{fig:EN_varyDelta} show the simulation results of $\mathbb{E}_{\mathfrak{c}_{\epsilon}}[\tau]$ as a function of width $B$ and resolution $\delta$ respectively, for the fixed composition non adaptive strategy $\mathfrak{c}_{\epsilon}\in \mathcal{C}_{\epsilon}^{NA}$ and for the sort PM adaptive strategy $\mathfrak{c}_{\epsilon}\in \mathcal{C}_{\epsilon}^{A}$, along with dominant terms of the lower bound of Lemma~\ref{lemma:converse_k_1}, and the upper bound of Lemma~\ref{lemma:achv} for a fixed noise per unit width $\sigma^2=0.25$. For both of these cases, we see that the adaptivity gain grows as the total number of locations increases; however in distinctly different manner as seen in Corollary~\ref{cor:two_regime_gains}.

\begin{figure}[ht]
     \centering
     \includegraphics[width=0.7\textwidth]{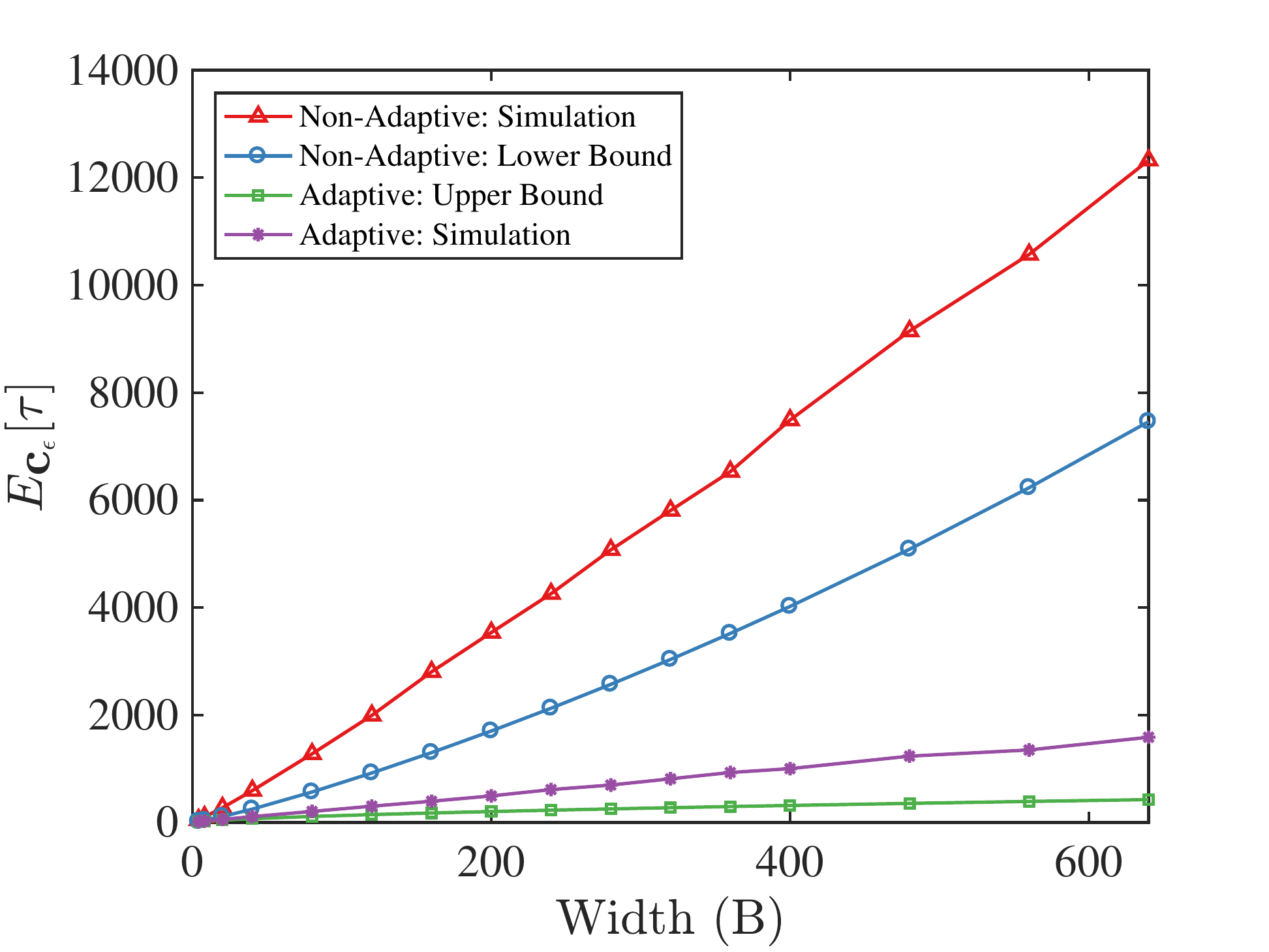}
     \caption{$\mathbb{E}_{\mathfrak{c}_{\epsilon}}[\tau]$ with $\epsilon = 10^{-4}$, $\sigma^2=0.25$, and $\delta=1$, as a function of B.} 
 \label{fig:EN_varyB}
 \end{figure}

 \begin{figure}[ht]
     \centering
     \includegraphics[width=0.7\textwidth]{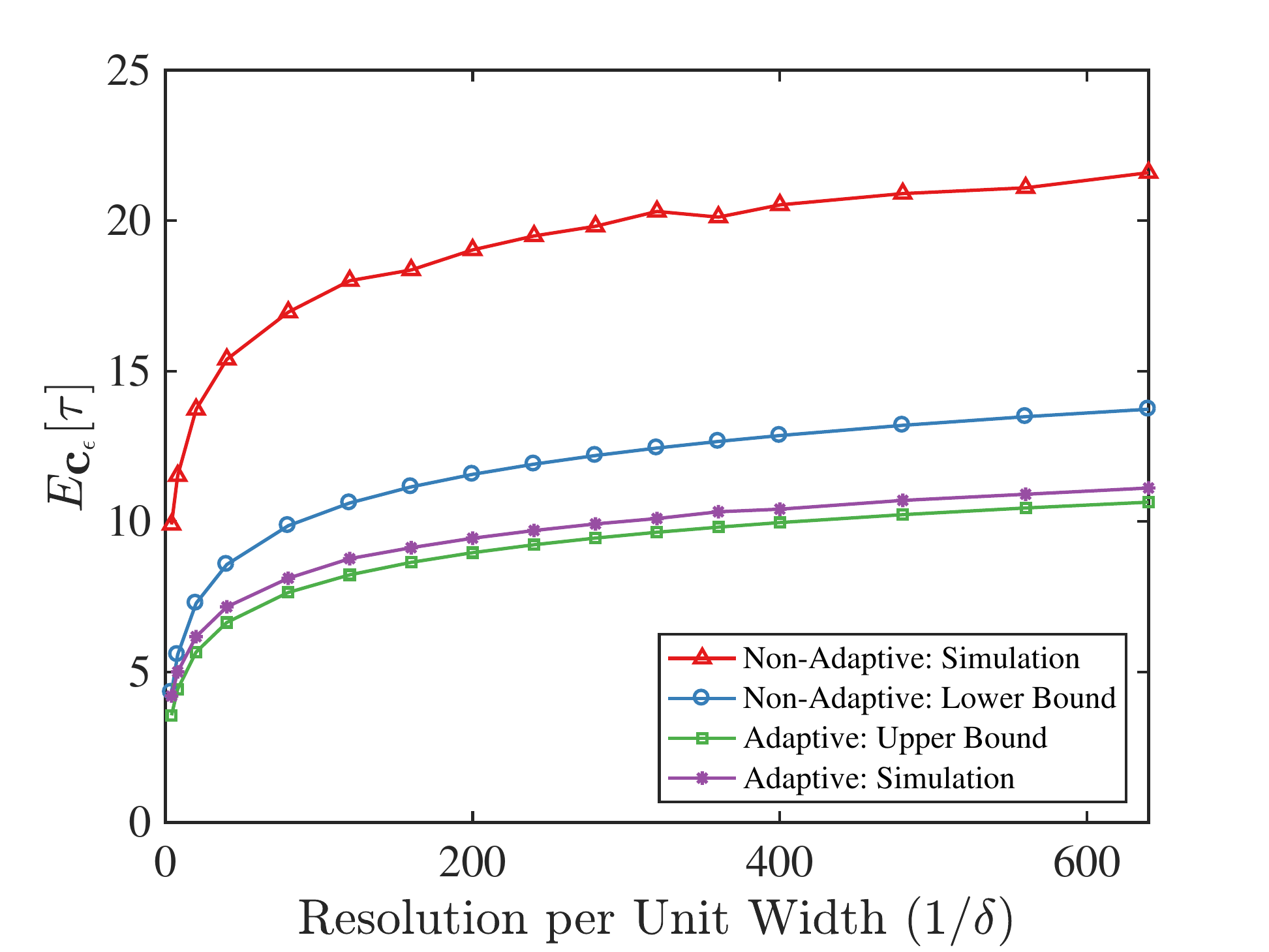}
     \caption{$\mathbb{E}_{\mathfrak{c}_{\epsilon}}[\tau]$ with $\epsilon = 10^{-4}$, $\sigma^2=0.25$ and $B=1$, as a function of $\delta$.} 
 \label{fig:EN_varyDelta}
 \end{figure}

\subsection{Relating the Regimes of Operation to Capacity}
In this section, we attempt to relate these two regimes of operation to the manner in which the capacity of a BAWGN channel varies.  Let noise parameter $Z_n \sim \mathcal{N}(0, 2q\sigma^2_{\text{Total}})$, where $q = \frac{|\textbf{S}_n|\delta}{B}$ is the fraction of the search region measured and $\sigma^2_{\text{Total}} = \frac{B \sigma^2}{2}$ is the half bandwidth variance. Figure~\ref{fig:EN_sigma} show the effects of the half bandwidth variance on the capacity of a search as a function of $q$. Intuitively, the target acquisition rate of the adaptive strategy relates to the time spent searching sets of size $q$ as $q$ varies from $\frac{1}{2}$ to $\frac{\delta}{B}$. This means for sufficiently small $\sigma^2_{\text{Total}}$ ($\leq 0.005$ in this example), the adaptivity gain is negligible since $C_{\text{BAWGN}}(\frac{1}{2}, 2q\sigma^2_{\text{Total}})$ is about 1 for all $q$. For medium range $\sigma^2_{\text{Total}}$ (for e.g., $ 0.05$ in this example), the adaptivity effects the target acquisition rate from $C_{\text{BAWGN}}(\frac{1}{2}, 2q^{\ast}\sigma^2_{\text{Total}})$ to $C_{\text{BAWGN}}(\frac{1}{2}, 2\frac{\delta}{B}\sigma^2_{\text{Total}})$. When $\sigma^2_{\text{Total}}$ grows significantly, however, the capacity drops rather quickly to zero, forcing the non-adaptive strategies to operate close to exhaustive search, whose measurement time increases linearly in $\frac{B}{\delta}$. This is the regime with most significant adaptivity gain as predicted by Corollary~\ref{cor:two_regime_gains}.
\begin{figure}[ht]
     \centering
     \includegraphics[width=0.7\textwidth]{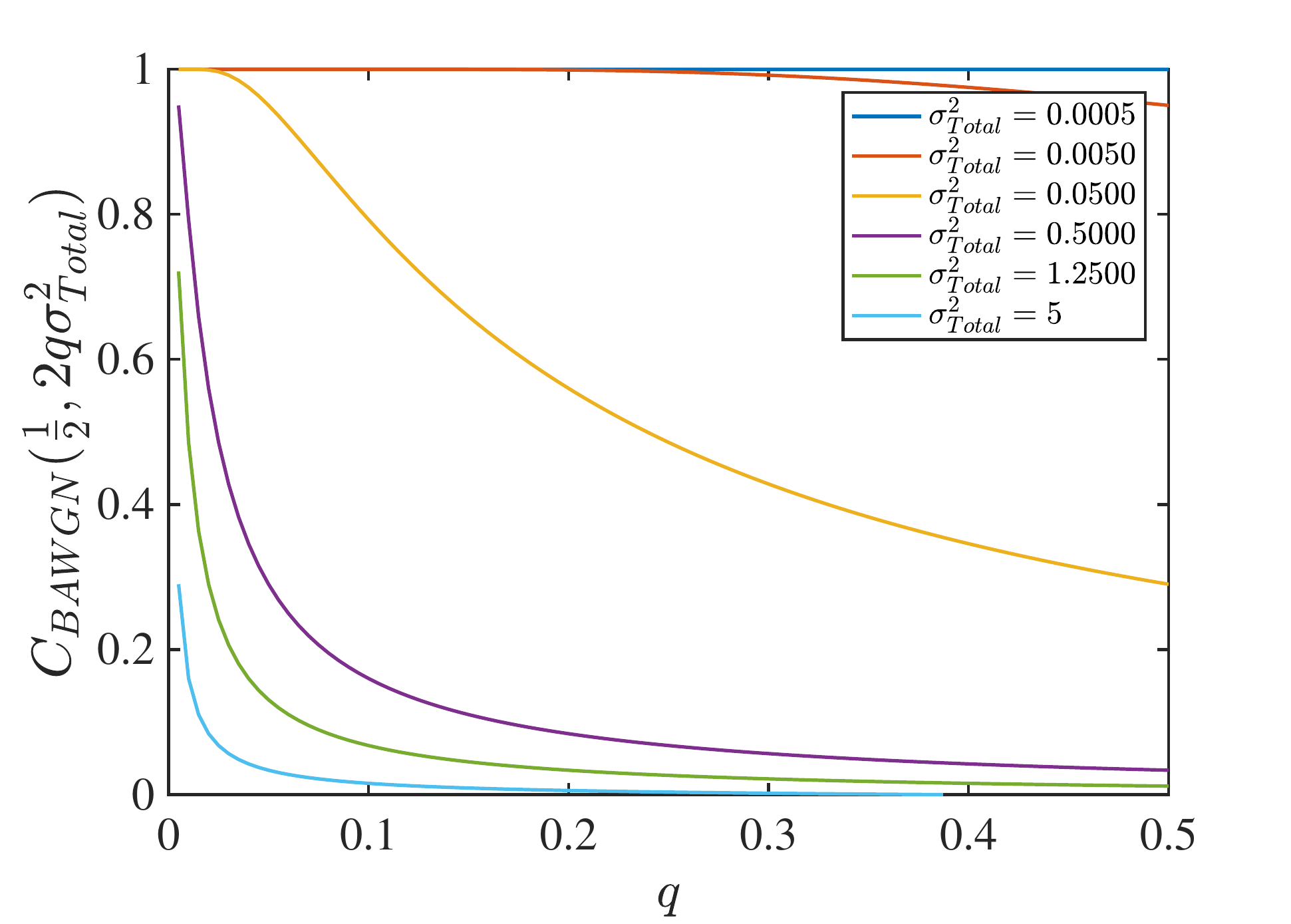}
     \caption{For arbitrary $B$ and $\delta$, and with $\epsilon = 10^{-4}$, $\mathbb{E}_{\mathfrak{c}_{\epsilon}}[\tau]$ as a function of $q$ for different values of total noise variance ($\sigma^2_{Total}$)}
 \label{fig:EN_sigma}
 \end{figure}

%

\subsection{Beyond i.i.d}

In this section, we analyze $\mathbb{E}_{\mathfrak{c}_{\epsilon}}[\tau]$ under a general noise model, as presented in section (VI-A).  Recall, $Y_n \sim \mathcal{N}(X_n, f(|\mbf{S}_n|)\delta \sigma^2)$, where $f$ is a non-decreasing function of the measurement vector $|\textbf{S}_n|$. Figure~\ref{fig:capacity} shows that the behavior of the capacity range of a search with fixed parameters $B$, $\delta$, $\textbf{S}_n$ can be significantly affected by the function $f(\cdot)$. Let us consider the noise function $f(\cdot)$ to be of the form $ |\mbf{S}_n|^{\gamma}$. Figure~\ref{fig:EN_varygamma} shows the plot of dominant terms of the lower bound of Lemma~\ref{lemma:converse_k_1}, and the upper bound of Lemma~\ref{lemma:achv} as a function of $\sigma^2$ for the values of $\gamma \in \{0.5, 1, 2\}$. The adaptivity gain is clearly more significant for larger values of gamma and hence, validates the need for generalizing the noise function. 

 \begin{figure}[ht]
     \centering
     \includegraphics[width=0.7\textwidth]{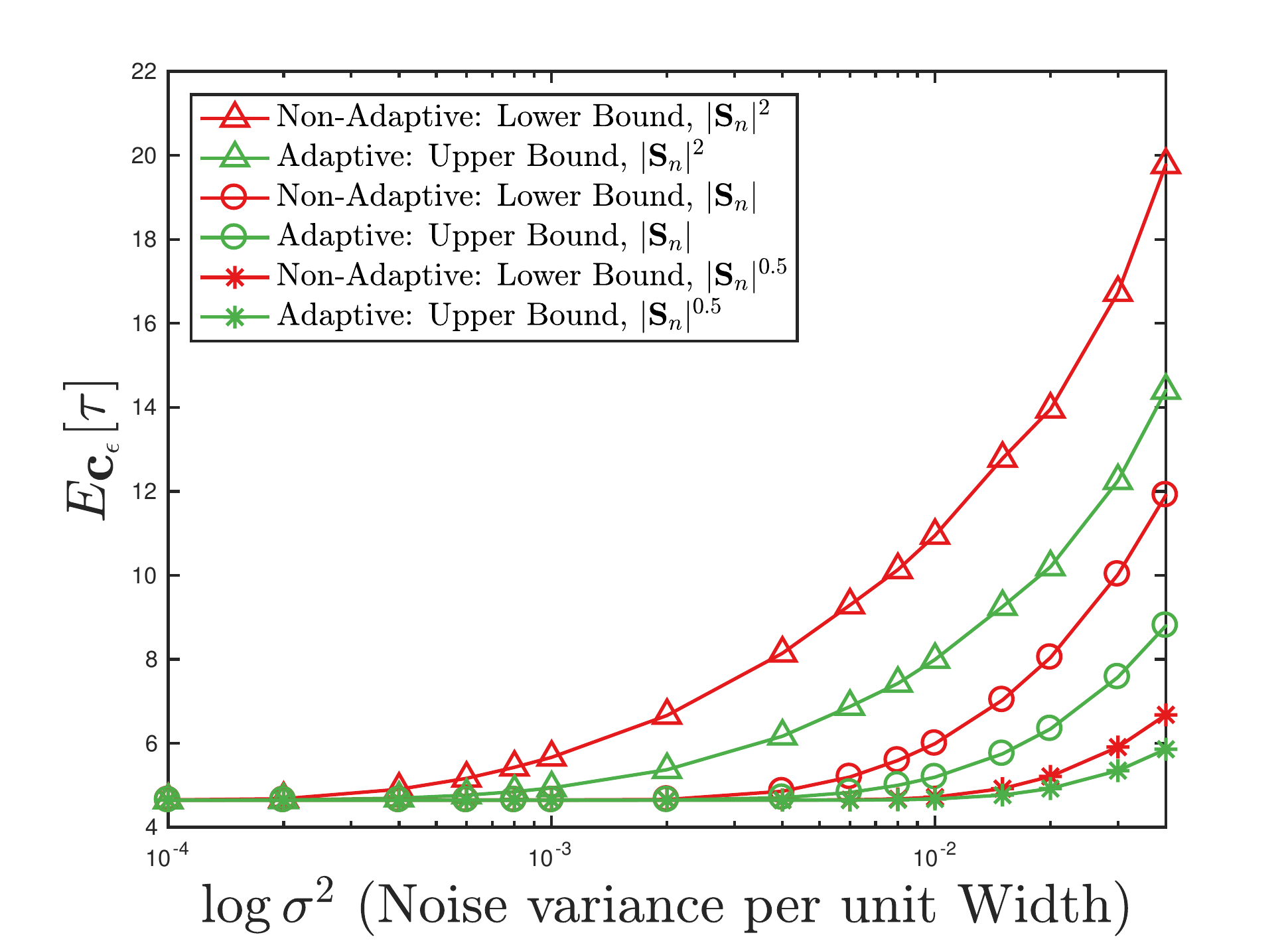}
     \caption{$\mathbb{E}_{\mathfrak{c}_{\epsilon}}[\tau]$ with $\epsilon = 10^{-4}$, $\sigma^2=0.25$ and $B=25$, $\delta =1$,  as a function of $\gamma$ when $Z_n \sim \mathcal{N}(0, |\mbf{S}_n|^{\gamma}\delta \sigma^2)$.} 
 \label{fig:EN_varygamma}
 \end{figure}

\section{Conclusion and Future Work}
We considered the problem of searching for a target's unknown location under measurement dependent Gaussian noise. We showed that this problem is equivalent to channel coding over a BAWGN channel with state and feedback. We used this connection to utilize feedback code based adaptive search strategies. We obtained information theoretic converses to characterize the fundamental limits on the target acquisition rate under both adaptive and non-adaptive strategies. As a corollary, we obtained a lower bound on the adaptivity gain. We identified two asymptotic regimes with practical applications where our analysis shows that adaptive strategies are far more critical when either noise intensity or the total search width is large. In contrast, in scenarios where neither the total width nor noise intensity is large, non-adaptive strategies might perform quite well. The immediate step is the extension of this work to a model with $r>1$ target locations, where the problem has been shown to be equivalent to MAC encoding with feedback~\cite{nancy_asilomar}.

\appendix
\label{appendix}
\subsection{Proof of Lemma~1}
Applying Fano's inequality~\cite{CoverBook2nd} to any non-adaptive search strategy that locates the target among $\frac{B}{\delta}$ locations with $P_e \leq \epsilon$, we have
\begin{align}
\log \left(\frac{B}{\delta}\right)
& \overset{(a)} \leq \frac{1}{1-\epsilon} \sup_{X^n} \sum_{i = 1}^{n} I(X_i, Y_i) +\frac{h(\epsilon)}{1-\epsilon}
\nonumber
\\
&\overset{(b)}\leq \frac{1}{1-\epsilon}\sum_{i = 1}^{n}  C_{\text{BAWGN}} \left(q_i, q_i B \sigma^2 \right) +\frac{h(\epsilon)}{1-\epsilon}
\nonumber
\\
&\leq \frac{n}{1-\epsilon} \max_{q \in \mbf{I}_{\frac{B}{\delta}}} C_{\text{BAWGN}}(q, q B \sigma^2) +\frac{h(\epsilon)}{1-\epsilon},
\end{align}
where $(a)$ follows from the fact that $X_i$ and $Z_i$ for $i = 1,2, \ldots, n$  are independent over time and independent of  past observations due to the non-adaptive nature of the search strategy. Since $X_i \sim \ber(q_i)$ and $Z_i \sim \mathcal{N}(0, q_i  B\sigma^2)$, $(b)$ follows from the fact that $I(X_i, Y_i) = C_{\text{BAWGN}}\left(q_i, q_i B \sigma^2 \right)$. Rearranging the above equation, we have the assertion of the lemma.

\subsection{Proof of Lemma~2}

For any $q \in \mathcal{I}_{\frac{B}{\delta}}$ and under any query vector $\mbf{S}_n \in \mathcal{U}_{\frac{B}{\delta}}$ such that $|\mbf{S}_n| = \frac{qB}{\delta}$ we have the following
\begin{align}
\left| \log \frac{\P(y| \mbf{S}_n, \mbf{W}(i) = 1)}{\P(y| \mbf{S}_n, \mbf{W}(j) = 1)} \right|
=\left\{
\begin{array}{cl}
0 & \text{if $\mbf{S}_n(i) = 1$ and $\mbf{S}_n(j) =1$}, \\
0 & \text{if $\mbf{S}_n(i) \neq 1$ and $\mbf{S}_n(j) \neq 1$},\\
\left| \frac{2y-1}{2qB\sigma^2}\right| &   
\text{Otherwise}.\\
\end{array}
\right.
\end{align}
Hence, we have
\begin{align}
\max_{i,j \in [\frac{B}{\delta}]} \max_{\mbf{S}_n \in \mathcal{U}_{\frac{B}{\delta}}}\int_{-\infty}^{\infty}\P(y| \mbf{S}_n, \mbf{W}(i) = 1)  
\left|\log \frac{\P(y| \mbf{S}_n, \mbf{W}(i) = 1)}{\P(y| \mbf{S}_n, \mbf{W}(j) = 1)} \right|^{1+\gamma}
\nonumber
\\
= 
\max_{q \in \mathcal{I}_{\frac{B}{\delta}}}\left\{\int_{-\infty}^{\infty} \frac{e^{-\frac{y^2}{2qB\sigma^2}}}{\sqrt{2 \pi qB \sigma^2}} \left| \frac{2y-1}{2qB\sigma^2}\right|^{1+\gamma} dy \right\}.
\end{align}
Therefore, there exists $\xi_{\frac{B}{\delta}} < \infty$ and $\gamma > 0$ such that 
\begin{align}
&\max_{i,j \in [\frac{B}{\delta}]} \max_{\mbf{S}_n \in \mathcal{U}_{\frac{B}{\delta}}}\int_{-\infty}^{\infty}\P(y| \mbf{S}_n, \mbf{W}(i) = 1)
\left| \log \frac{\P(y| \mbf{S}_n, \mbf{W}(i) = 1)}{\P(y| \mbf{S}_n, \mbf{W}(j) = 1)} \right|^{1+\gamma}
\leq \xi_{\frac{B}{\delta}}.
\end{align}
Define 
\begin{align}
\psi_{\frac{B}{\delta}}(a) 
: = 
\max_{q \in \mathcal{I}_{\frac{B}{\delta}}}\left\{\int_{-\infty}^{\infty} \frac{e^{-\frac{y^2}{2Bq\sigma^2}}}{\sqrt{2 \pi qB \sigma^2}} \left[ \frac{2y-1}{2qB\sigma^2}\right]_a dy \right\},
\end{align}
and recall that 
\begin{align}
[g]_a = 
\left\{
\begin{array}{ll}
g & \text{if $g \geq a$,}\\
0 & \text{if $g < a$.}
\end{array}
\right.
\end{align}
Note that $\psi_{\frac{B}{\delta}}(a)$ is non-increasing in a, and we have $\psi_{\frac{B}{\delta}}(a) \leq a^{-\gamma}\xi_{\frac{B}{\delta}}$. Hence, $\psi_{\frac{B}{\delta}}(a) \to 0$ as $a \to \infty$.

\subsubsection{Stage I}

\begin{lemma}
\label{lemm:stage_1_time}
Under the fixed composition search strategy while searching over a search region of width $B$ among $\frac{1}{\alpha}$ locations such that $|\mbf{S}^{\prime}_n|\alpha = q^{\ast}$ for $n \geq 1$, the following holds true for all $n \geq 1$ 
\begin{align}
\expe\left[U(\mbs{\rho}^{\prime}_{n+1}) - U(\mbs{\rho}^{\prime}_n)| \mathcal{F}_n , \mbf{S}^{\prime}_n \right] 
\geq C_{\text{BAWGN}}\left(q^{\ast},  q^{\ast} B \sigma^2 \right),
\end{align}
where define $
U(\mbs{\rho}^{\prime}_n)
:= \sum_{i = 1}^{\frac{1}{\alpha}} \mbs{\rho}^{\prime}_n(i)\log \frac{\mbs{\rho}^{\prime}_n(i)}{1-\mbs{\rho}^{\prime}_n(i)}.
$
\end{lemma}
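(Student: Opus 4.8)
**Proof proposal for Lemma~3 (lower bound on expected one-step progress of the log-likelihood-ratio potential under the fixed composition strategy).**

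The plan is to recognize $U(\mbs{\rho}^{\prime}_n)$ as an aggregate log-likelihood-ratio potential and to show that its expected increment equals a mutual information term that, under the fixed composition strategy, is exactly $C_{\text{BAWGN}}(q^{\ast}, q^{\ast}B\sigma^2)$. First I would fix $\mathcal{F}_n$ (the $\sigma$-algebra generated by $\mbf{Y}^{n-1}$) and the query $\mbf{S}^{\prime}_n$ with $|\mbf{S}^{\prime}_n|\alpha = q^{\ast}$, and write out $U(\mbs{\rho}^{\prime}_{n+1}) - U(\mbs{\rho}^{\prime}_n)$ using the Bayesian update~\eqref{eq:rho_update1}. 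The key algebraic observation is that for the target index $i^{\ast}$ (where $\mbf{W}^{\prime}(i^{\ast})=1$) the term $\mbs{\rho}^{\prime}_n(i)\log\frac{\mbs{\rho}^{\prime}_n(i)}{1-\mbs{\rho}^{\prime}_n(i)}$, after the update, picks up exactly the log-likelihood ratio $\log\frac{G(Y_n;X_n,q^{\ast}B\sigma^2)}{\mathcal{D}^{\prime}_n}$-type increments, and summing over $i$ the normalization terms telescope. The standard identity here (the one underlying the analysis of Posterior Matching and of Burnashev-type potentials) is that
\begin{align}
\expe\left[U(\mbs{\rho}^{\prime}_{n+1}) - U(\mbs{\rho}^{\prime}_n) \,\middle|\, \mathcal{F}_n, \mbf{S}^{\prime}_n\right]
= I\!\left(X_n ; Y_n \,\middle|\, \mathcal{F}_n, \mbf{S}^{\prime}_n\right),
\end{align}
where $X_n = (\mbf{W}^{\prime})^{\intercal}\mbf{S}^{\prime}_n \in \{0,1\}$ is the clean observation. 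I would verify this identity by direct computation: expanding $U$, substituting the update rule, taking the expectation over $Y_n$ conditioned on $(\mathcal{F}_n, \mbf{S}^{\prime}_n, X_n)$ and then over $X_n$, and collecting terms into the divergence form of mutual information.

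Next I would evaluate the mutual information term under the fixed composition strategy. Because $\mbf{S}^{\prime}_n$ is chosen uniformly among vectors of fixed weight $|\mbf{S}^{\prime}_n| = \frac{q^{\ast}}{\alpha}$ and the prior is (conditionally) such that the target is uniform over the $\frac{1}{\alpha}$ sections, the clean signal satisfies $X_n \sim \ber(q^{\ast})$ given $\mathcal{F}_n$ — or, more precisely, I would argue that the relevant quantity is a divergence between the mixture output density and the conditional ones, and that the fixed composition construction guarantees the input marginal is exactly $\ber(q^{\ast})$. Combined with the fact that the noise is $\mathcal{N}(0, q^{\ast}B\sigma^2)$ with $|\mbf{S}^{\prime}_n|\delta\sigma^2 = \frac{q^{\ast}}{\alpha}\cdot\alpha B\cdot\frac{\sigma^2}{1}\cdot\frac{\delta}{\delta}$... (checking: $|\mbf{S}^{\prime}_n|$ counts sections of width $\alpha B$, so the noise variance is $\frac{q^{\ast}}{\alpha}\cdot \alpha B \cdot \sigma^2 = q^{\ast}B\sigma^2$), we get that $I(X_n;Y_n \mid \mathcal{F}_n, \mbf{S}^{\prime}_n) = C_{\text{BAWGN}}(q^{\ast}, q^{\ast}B\sigma^2)$ by the definition of $C_{\text{BAWGN}}$. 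Actually, since the conditional input distribution may not be exactly $\ber(q^{\ast})$ at every step once the posterior has concentrated, I would instead lower-bound: the expected increment is at least $\min$ over admissible input distributions consistent with the strategy, and the fixed composition choice together with a convexity/data-processing argument shows it is $\geq C_{\text{BAWGN}}(q^{\ast}, q^{\ast}B\sigma^2)$; alternatively one shows equality holds when the measurement is genuinely uniform over weight-$\frac{q^\ast}{\alpha}$ vectors, which forces the marginal of $X_n$ to be $\ber(q^\ast)$ regardless of the posterior.

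The main obstacle I anticipate is the second point: carefully justifying that the per-step conditional mutual information is bounded below by $C_{\text{BAWGN}}(q^{\ast}, q^{\ast}B\sigma^2)$ rather than just equal to some $C_{\text{BAWGN}}(q_n, q^{\ast}B\sigma^2)$ with a possibly suboptimal $q_n$. The resolution is that the fixed composition strategy picks $\mbf{S}^{\prime}_n$ uniformly at random over all weight-$\frac{q^{\ast}}{\alpha}$ vectors \emph{independently of the posterior}, so conditioned on $\mathcal{F}_n$ and on the true section, the event $\{X_n = 1\}$ (target section is probed) has probability exactly $\frac{q^{\ast}/\alpha}{1/\alpha} = q^{\ast}$; hence $X_n \mid \mathcal{F}_n \sim \ber(q^{\ast})$ exactly, and the identity gives equality, not just a bound. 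The remaining routine work is the bookkeeping in the telescoping computation of $\expe[U(\mbs{\rho}^{\prime}_{n+1}) - U(\mbs{\rho}^{\prime}_n)]$, which I would carry out by splitting the sum over indices into those with $\mbf{S}^{\prime}_n(i)=1$ and those with $\mbf{S}^{\prime}_n(i)=0$ and using $\sum_i \mbs{\rho}^{\prime}_n(i) = 1$ together with the definition of $\mathcal{D}^{\prime}_n$.
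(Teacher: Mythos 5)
Your opening identity is where the argument goes wrong: the expected one-step increment of $U$ is \emph{not} the mutual information $I(X_n;Y_n)$. If you carry out the telescoping computation you describe (substitute the update \eqref{eq:rho_update1}, integrate over $Y_n$, and use $\mathcal{D}^{\prime}_n-\mbs{\rho}^{\prime}_n(i)P_i(y)=(1-\mbs{\rho}^{\prime}_n(i))\sum_{j\neq i}\tfrac{\mbs{\rho}^{\prime}_n(j)}{1-\mbs{\rho}^{\prime}_n(i)}P_j(y)$), you obtain
\begin{align*}
\expe\left[U(\mbs{\rho}^{\prime}_{n+1}) - U(\mbs{\rho}^{\prime}_n)\mid\mathcal{F}_n,\mbf{S}^{\prime}_n\right]
= \sum_{i}\mbs{\rho}^{\prime}_n(i)\, D\Bigl(P_i \,\Big\|\, \sum_{j\neq i}\tfrac{\mbs{\rho}^{\prime}_n(j)}{1-\mbs{\rho}^{\prime}_n(i)}P_j\Bigr),
\end{align*}
with $P_i:=\P(\cdot\mid \mbf{W}^{\prime}(i)=1,\mbf{S}^{\prime}_n)$, i.e.\ the \emph{extrinsic} Jensen--Shannon divergence of~\cite{6400990}, in which each conditional law is compared against the leave-one-out mixture rather than the full output law $\sum_j\mbs{\rho}^{\prime}_n(j)P_j$. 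This quantity dominates $I(\mbf{W}^{\prime};Y_n)=I(X_n;Y_n)$, so that slip alone would be survivable. What is fatal is your second step. Conditioned on a realized measurement vector, the input is $\ber(q_n)$ with $q_n=\sum_{i:\mbf{S}^{\prime}_n(i)=1}\mbs{\rho}^{\prime}_n(i)$, which is posterior-dependent; your fix---that the uniform choice over weight-$\frac{q^{\ast}}{\alpha}$ vectors forces $X_n\sim\ber(q^{\ast})$---conflates the marginal of $X_n$ over the random $\mbf{S}^{\prime}_n$ with its conditional law given $\mbf{S}^{\prime}_n$. Averaging the per-realization mutual information over the random measurement gives $\expe\left[C_{\text{BAWGN}}(q_n,q^{\ast}B\sigma^2)\right]\le C_{\text{BAWGN}}(q^{\ast},q^{\ast}B\sigma^2)$ by concavity of mutual information in the input distribution---the inequality points the wrong way---and once the posterior concentrates on one section the mutual information tends to $0$, so no uniform-in-$n$ lower bound of the claimed form can come out of this route.

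The lemma is nonetheless true, and the reason is precisely the leave-one-out structure you discarded. The paper keeps the EJS expression, averages over the uniformly random fixed-composition partition, and pushes that average inside the KL divergence by Jensen (convexity of $D(\cdot\|\cdot)$ in its second argument). By exchangeability of the ensemble, the coefficients $\sum_{p}\lambda_p\mbf{1}_{\{i\in A^k_n\}}\mbf{1}_{\{j\in A^l_n\}}$ are identical for all $j\neq i$, so the weights $\tfrac{\mbs{\rho}^{\prime}_n(j)}{1-\mbs{\rho}^{\prime}_n(i)}$ sum out to $1$ and the resulting mixtures have weights $\tfrac{K^{\ast}-1}{M-1},\tfrac{M-K^{\ast}}{M-1}$, etc., which depend only on the combinatorics and never on the posterior; a final monotonicity step replaces these by $q^{\ast}$ and $1-q^{\ast}$, yielding $C_{\text{BAWGN}}(q^{\ast},q^{\ast}B\sigma^2)$ for every $n$, including after concentration. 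To repair your write-up, keep your telescoping computation but stop at the EJS form, then follow this Jensen-plus-exchangeability route instead of passing to mutual information.
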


\begin{proof}
The proof follows closely the proof of inequality (9) in~\cite{6400990}. There are $\frac{1}{\alpha}$ locations of length $\alpha B$ and hence query vector $\mbf{S}^{\prime}_n \in \mathcal{U}_{\frac{1}{\alpha}}$. At every time instant under the fixed composition strategy $K^{\ast} = |\mbf{S}_n|= \frac{q^{\ast}}{\alpha}$ number of locations are searched. i.e., a region of length $q^{\ast}B$ is searched. Let $\mathcal{P}_{K^{\ast}}$ denote the collection of all partitions $p$ of search locations $1$ to $\frac{1}{\alpha}$ into sets $A_n^0$ and $A_n^1$ such that $|A_n^1| = K^{\ast}$. The probability of picking a partition $p \in \mathcal{P}_{K^{\ast}}$ is $\lambda_p = {{\frac{1}{\alpha}}\choose{K^{\ast}}}^{-1}$. For simplicity of exposition let $M = \frac{1}{\alpha}$. Also, we have 
$\sum_{p \in \mathcal{P}_{K^{\ast}}}\lambda_p\mbf{1}_{ \{i \in A^0_n\}} = \pi^{\ast}_{0} := \frac{M - K^{\ast}}{M}$, and
$\sum_{p \in \mathcal{P}_{K^{\ast}}}\lambda_p\mbf{1}_{\{i \in A^1_n\}} = \pi^{\ast}_{1} :=\frac{K^{\ast}}{M}$.

Since a region of $q^{\ast}B$ is searched at every time instant, the noise variance is fixed at $q^{\ast}B \sigma^2$. Hence, let $\P_k = \P(Y|X=k, |A^1_n|= K^{\ast}) = \mathcal{N}(k,  q^{\ast} B \sigma^2)$ for $k\in \{0,1 \}$. Consider
\begin{align}
&\expe\left[U(\mbs{\rho}^{\prime}_{n+1}) - U(\mbs{\rho}^{\prime}_n)| \mathcal{F}_n , \mbf{S}_n \right] 
\nonumber
\\
&= 
\sum_{p \in \mathcal{P}_{ K^{\ast}}} \lambda_p \sum_{i = 1}^M \sum_{k = 0}^{1} \mbs{\rho}^{\prime}_n(i) \mbf{1}_{\{i \in A^k_n\}}\kl{ \P_k}{\sum_{j \neq i} \sum_{l = 1}^{1}\frac{\mbs{\rho}^{\prime}_n(j)}{1-\mbs{\rho}^{\prime}_n(i)}\mbf{1}_{\{i \in A^l_n\}} \P_l}
\nonumber
\\
& =  \sum_{i = 1}^M \mbs{\rho}^{\prime}_n(i) \sum_{k = 0}^{1} \pi^{\ast}_{k} \sum_{p \in \mathcal{P}_{ K^{\ast}}} \frac{\lambda_p }{\pi^{\ast}_{k}}  \mbf{1}_{\{i \in A^k_n\}} \kl{\P_k}{\sum_{j \neq i} \sum_{l = 1}^{1}\frac{\mbs{\rho}^{\prime}_n(j)}{1-\mbs{\rho}^{\prime}_n(i)}\mbf{1}_{\{i \in A^l_n\}} \P_l}
\nonumber
\\
&\overset{(a)}\geq
\sum_{i = 1}^M \mbs{\rho}^{\prime}_n(i) \sum_{k = 0}^{1} \pi^{\ast}_{k}  D\left( \P_k \left\| \sum_{j \neq i} \sum_{l = 1}^{1}\frac{\mbs{\rho}^{\prime}_n(j)}{1-\mbs{\rho}^{\prime}_n(i)} \right. \sum_{p \in \mathcal{P}_{ K^{\ast}}} \frac{\lambda_p}{\pi^{\ast}_{k}}\mbf{1}_{\{i \in A^k_t\}}\mbf{1}_{\{i \in A^l_t\}} \P_l\right)
\nonumber
\\
&\overset{(b)}=
\sum_{i = 1}^M \mbs{\rho}^{\prime}_n(i) \left(
 \pi^{\ast}_1\kl{ \P_1 }{ \frac{ K^{\ast}-1}{M-1}\P_1 + \frac{M- K^{\ast}}{M-1} P_0}  \right.
\nonumber
\\
& \hspace{1cm}+
\left. 
\pi^{\ast}_0 \kl{ \P_0}{\frac{M- K^{\ast}-1}{M-1}\P_0 + \frac{ K^{\ast}}{M-1} P_1} 
\right] 
\nonumber
\\
&\geq
\sum_{i = 1}^M \mbs{\rho}^{\prime}_n(i) \left( 
\pi^{\ast}_1 \kl{ \P_1 }{ \frac{ K^{\ast}}{M}\P_1 + \frac{M- K^{\ast}}{M} P_0}
\pi^{\ast}_0 \kl{ \P_0}{\frac{M- K^{\ast}}{M}\P_0 + \frac{ K^{\ast}}{M} P_1}
\right) 
\nonumber
\\
&\overset{(c)} = C_{\text{BAWGN}} \left( q^{\ast}, q^{\ast} B  \sigma^2 \right),
\end{align}
where $(a)$ follows from Jensen's inequality $(b)$ follows from the definition of $\pi_0^{\ast}$, $\pi_1^{\ast}$ and $\sum_{p \in \mathcal{P}_{K^{\ast}}}\lambda_p\mbf{1}_{\{i \in A^0_n\}}\mbf{1}_{\{j \in A^1_n\}} 
= \sum_{p \in \mathcal{P}_{K^{\ast}}}\lambda_p\mbf{1}_{\{i \in A^1_n\}}\mbf{1}_{\{j \in A^0_n\}} = \frac{K^{\ast}(M-K^{\ast})}{M(M-1)}$,
$\sum_{p \in \mathcal{P}_{K^{\ast}}}\lambda_p\mbf{1}_{\{i \in A^0_n\}}\mbf{1}_{\{j \in A^0_n\}} = \frac{\pi^{\ast}_0(M-K^{\ast}-1)}{M-1}$,
$\sum_{p \in \mathcal{P}_{K^{\ast}}}\lambda_p\mbf{1}_{\{i \in A^1_n\}}\mbf{1}_{\{j \in A^1_n\}} = \frac{\pi_1^{\ast}(K^{\ast}-1)}{M-1}$, and $(c)$ is the definition of non-adaptive BAWGN channel capacity with input distribution $\text{Bern}(q^{\ast})$ and noise variance $q^{\ast}B \sigma^2$.
\end{proof}

\begin{lemma}
Under the fixed composition search strategy while searching over a search region of width $B$ among $\frac{1}{\alpha}$ locations such that $|\mbf{S}^{\prime}_n|\alpha = q^{\ast}$ for $n \geq 1$, the following holds true for the expected number of queries while searching with $\P_e \leq \frac{\epsilon}{2}$
\begin{align}
\expe_{\mathfrak{c}_{\epsilon}^1}[\tau^1]
\leq 
\frac{\log \frac{1}{\alpha} + \log \frac{2}{\epsilon} + \log \log \frac{B}{\delta} + a_{\eta}}{C_{\text{BAWGN}}\left(q^{\ast}, q^{\ast} B \sigma^2 \right) - 
\eta}.
\end{align}
\end{lemma}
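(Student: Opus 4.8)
The plan is to combine the drift lower bound of Lemma~\ref{lemm:stage_1_time} with a standard optional-stopping / martingale concentration argument, in the style of the analysis of variable-length coding with feedback (e.g.\ the Burnashev-type arguments and the Posterior Matching analyses in~\cite{6400990, SungEnChiu}). The quantity to track is the log-likelihood-ratio potential $U(\mbs{\rho}^{\prime}_n) = \sum_i \mbs{\rho}^{\prime}_n(i)\log\frac{\mbs{\rho}^{\prime}_n(i)}{1-\mbs{\rho}^{\prime}_n(i)}$, which at time $0$ equals $\frac{1}{\alpha}\cdot\alpha\log\frac{\alpha}{1-\alpha} = \log\frac{\alpha}{1-\alpha} \approx -\log\frac{1}{\alpha}$, and which must reach a level of order $\log\frac{2}{\epsilon}$ by the time $\max_i\mbs{\rho}^{\prime}_n(i)\geq 1-\frac{\epsilon}{2}$, for a total required increase of roughly $\log\frac{1}{\alpha}+\log\frac{2}{\epsilon}$.

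First I would establish that $U(\mbs{\rho}^{\prime}_n)$ has drift at least $C_{\text{BAWGN}}(q^{\ast},q^{\ast}B\sigma^2)$ per step (this is exactly Lemma~\ref{lemm:stage_1_time}), so that $M_n := U(\mbs{\rho}^{\prime}_n) - n\,C_{\text{BAWGN}}(q^{\ast},q^{\ast}B\sigma^2)$ is a submartingale with respect to $\mathcal{F}_n$. Next I would control the overshoot: when $\tau^1$ fires, $U$ can exceed its target level only by the size of the last increment, and the last increment of $U$ is bounded by the last increment of the log-likelihood ratio between any two hypotheses, i.e.\ by $\left|\frac{2Y-1}{2q^{\ast}B\sigma^2}\right|$ under a Gaussian $Y$. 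The truncated-moment function $\psi_{\frac{B}{\delta}}(a)$ introduced just before the lemma is exactly the device for this: it is non-increasing, tends to $0$ as $a\to\infty$, and $a_\eta$ is defined so that $\frac{a_\eta}{a_\eta-3}\psi_{\frac{B}{\delta}}(a_\eta-3) = \eta$. This lets me absorb the overshoot and the "excess drift variability" into the $-\eta$ correction in the denominator and the additive $a_\eta$ in the numerator — essentially, I run the submartingale argument against the slightly reduced rate $C_{\text{BAWGN}}(q^{\ast},q^{\ast}B\sigma^2)-\eta$, paying $a_\eta$ for the boundary crossing. The $\log\log\frac{B}{\delta}$ term enters because a uniform-in-$n$ bound on the probability that $\tau^1$ is large requires a union bound (or a maximal inequality) over the $\frac{B}{\delta}$ hypotheses, contributing a $\log(\text{number of hypotheses})$ factor inside a logarithm; more precisely it comes from bounding the contribution of the event that some posterior coordinate is ``revived'' after being nearly ruled out, whose probability is polynomially small in the number of locations.

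Concretely, the steps are: (i) initialize $U(\mbs{\rho}^{\prime}_0) = \log\frac{\alpha}{1-\alpha}$; (ii) invoke Lemma~\ref{lemm:stage_1_time} to get the submartingale $M_n$; (iii) apply the optional stopping theorem to $M_{n\wedge\tau^1}$, using a uniform-integrability / bounded-increments justification, to get $\expe[U(\mbs{\rho}^{\prime}_{\tau^1})] \geq \expe[\tau^1]\,C_{\text{BAWGN}}(q^{\ast},q^{\ast}B\sigma^2) + \log\frac{\alpha}{1-\alpha}$; (iv) upper bound $\expe[U(\mbs{\rho}^{\prime}_{\tau^1})]$ by the stopping threshold $\log\frac{2}{\epsilon}$ (since at $\tau^1$ the max posterior is $\geq 1-\frac\epsilon2$, a location's LLR is at most $\log\frac{1-\epsilon/2}{\epsilon/2}\leq\log\frac{2}{\epsilon}$ up to the overshoot) plus the overshoot bound controlled via $\psi_{\frac{B}{\delta}}$ and $a_\eta$, plus the $\log\log\frac{B}{\delta}$ term from the revival union bound; (v) solve for $\expe_{\mathfrak{c}^1_{\epsilon}}[\tau^1]$ and replace $C_{\text{BAWGN}}(q^{\ast},q^{\ast}B\sigma^2)$ by $C_{\text{BAWGN}}(q^{\ast},q^{\ast}B\sigma^2)-\eta$ in the denominator to subsume the correction terms, yielding the claimed bound.

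The main obstacle I expect is step (iv): cleanly bounding $\expe[U(\mbs{\rho}^{\prime}_{\tau^1})]$, i.e.\ simultaneously handling the overshoot at the stopping boundary and the possibility that a low-posterior coordinate temporarily inflates $U$ before $\tau^1$. The overshoot itself is standard once $\psi_{\frac{B}{\delta}}$ and $a_\eta$ are in hand, but making the martingale argument rigorous requires care: $U$ is not bounded, its increments are only bounded in a suitable $(1+\gamma)$-moment sense (which is why $\xi_{\frac{B}{\delta}}$ and $\gamma$ were introduced), and one must ensure the optional stopping theorem applies — typically by a truncation argument on the stopping time together with the integrability bound $\psi_{\frac{B}{\delta}}(a)\leq a^{-\gamma}\xi_{\frac{B}{\delta}}$. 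The $\log\log\frac{B}{\delta}$ contribution is the other delicate piece: it must be shown that the union bound over the $\frac{B}{\delta}$ candidate locations can be carried inside a logarithm (rather than costing a full $\log\frac{B}{\delta}$ in the numerator), which relies on the exponential-in-measurements decay of the probability that a wrong location's posterior recovers, exactly as in the Posterior Matching stopping-time analysis of~\cite{SungEnChiu}.
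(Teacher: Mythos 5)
Your proposal matches the paper's proof, which is carried out verbatim in the proof of Lemma~\ref{lemm:sortPM_tau} (the paper proves the present lemma by reference to it): the drift bound of Lemma~\ref{lemm:stage_1_time}, a truncated potential $U^{\prime}$ whose increments are capped at $a$ so that the overshoot is absorbed into the additive $a_{\eta}$ and the $-\eta$ correction via $\psi_{\frac{B}{\delta}}$, and Doob's optional stopping theorem applied to the resulting submartingale. The only minor discrepancy is your account of the $\log\log$ term: in the paper it arises not from a union bound over locations but from the elevated threshold $\tilde{\rho}^{\prime} = 1 - \frac{1}{1+\max\{\log M, 2/\epsilon\}}$ used to tie the potential $U$ to the max-posterior stopping rule, contributing $\log\frac{\tilde{\rho}^{\prime}}{1-\tilde{\rho}^{\prime}} \leq \log\log M + \log\frac{2}{\epsilon}$ to the numerator.
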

Proof is similar to the proof of Lemma~\ref{lemm:sortPM_tau}.

\subsubsection{Stage II}
Note that BAWGN capacity for all $q \in \mathcal{I}_{\frac{B}{\delta}}$
\begin{align}
C_{\text{BAWGN}}\left( \frac{1}{2}, qB\sigma^2\right)
&= 
D\left( \mathcal{N}(0, qB\sigma^2) \left\| \frac{1}{2}\mathcal{N}(0,  qB \sigma^2) + \frac{1}{2}\mathcal{N}(1,  qB\sigma^2)\right. \right)
\nonumber
\\
&=
D\left(\mathcal{N}(1, qB \sigma^2) \left\| \frac{1}{2}\mathcal{N}(0, qB\sigma^2) + \frac{1}{2}\mathcal{N}(1, qB\sigma^2)\right. \right).
\end{align}
Hence, the following Lemma follows from Proposition~3 in~\cite{7031961}.
\begin{lemma}
\label{lemm:EJSGaussian}
Under the sortPM search strategy while searching over a search region of width $\alpha B$ among $\frac{\alpha B}{\delta}$ locations, the following holds true for all $n \geq 1$ 
\begin{align}
\expe\left[U(\mbs{\rho}^{\prime \prime}_{n+1}) - U(\mbs{\rho}^{\prime \prime}_n)| \mathcal{F}_n , \mbf{S}_n \right] 
\geq C_{\text{BAWGN}}\left(\frac{1}{2}, \frac{\alpha B \sigma^2}{2} \right),
\end{align}
where define 
$
U(\mbs{\rho}^{\prime \prime}_n)
:= \sum_{i = 1}^{\frac{\alpha B}{\delta}} \mbs{\rho}^{\prime \prime}_n(i)\log \frac{\mbs{\rho}^{\prime \prime}_n(i)}{1-\mbs{\rho}^{\prime \prime}_n(i)}.
$
\end{lemma}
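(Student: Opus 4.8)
The plan is to establish the drift bound stated in Lemma~\ref{lemm:EJSGaussian} by recognizing that the sorted Posterior Matching strategy is precisely the adaptive query rule that, at every time step, splits the current posterior mass as evenly as possible between the two input symbols, and then invoking the known single-step drift inequality for the log-likelihood potential $U(\cdot)$ under such a rule. Concretely, I would proceed as follows. First, I would observe that the quantity $\expe[U(\mbs{\rho}^{\prime \prime}_{n+1}) - U(\mbs{\rho}^{\prime \prime}_n) \mid \mathcal{F}_n, \mbf{S}_n]$ can, by a computation identical in structure to the one carried out in the proof of Lemma~\ref{lemm:stage_1_time}, be written as a posterior-weighted sum of KL divergences of the form $D(\P_k \,\|\, \text{(mixture of }\P_0,\P_1\text{)})$, where $\P_k = \mathcal{N}(k, |\mbf{S}^{\prime\prime}_n|\delta\sigma^2)$, the mixing weights being determined by how the posterior mass (excluding the coordinate $i$ in question) is distributed across the probed set $\{I_n(1),\dots,I_n(k^{\ast}_n)\}$ and its complement.

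Next, I would use the defining property of $k^{\ast}_n$ — that it minimizes $|\sum_{j=1}^{i}\mbs{\rho}^{\downarrow}_n(j) - \tfrac12|$ — to argue that the probed set carries posterior mass as close to $\tfrac12$ as the granularity of the $\mbs{\rho}^{\downarrow}_n(j)$ allows; this is exactly the hypothesis under which Proposition~3 of~\cite{7031961} (the Extrinsic Jensen--Shannon / EJS drift bound for posterior matching) guarantees a per-step increase of at least the symmetric-input mutual information. The remaining ingredient is the identity recorded just before the lemma statement, namely that for a BAWGN channel with noise variance $v$,
\begin{align}
C_{\text{BAWGN}}\!\left(\tfrac12, v\right)
= D\!\left(\mathcal{N}(0,v) \,\Big\|\, \tfrac12\mathcal{N}(0,v) + \tfrac12\mathcal{N}(1,v)\right)
= D\!\left(\mathcal{N}(1,v) \,\Big\|\, \tfrac12\mathcal{N}(0,v) + \tfrac12\mathcal{N}(1,v)\right),
\end{align}
which lets us identify the EJS lower bound with $C_{\text{BAWGN}}(\tfrac12, |\mbf{S}^{\prime\prime}_n|\delta\sigma^2)$. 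Finally, since $|\mbf{S}^{\prime\prime}_n| \le \tfrac{\alpha B}{2\delta}$ always holds (the sort PM rule never probes more than half of the $\tfrac{\alpha B}{\delta}$ locations in the refined region, up to one location), and since $C_{\text{BAWGN}}(\tfrac12, v)$ is decreasing in $v$, we get $C_{\text{BAWGN}}(\tfrac12, |\mbf{S}^{\prime\prime}_n|\delta\sigma^2) \ge C_{\text{BAWGN}}(\tfrac12, \tfrac{\alpha B\sigma^2}{2})$, yielding the claimed bound.

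The main obstacle I anticipate is handling the non-dyadic posterior: because the $\mbs{\rho}^{\downarrow}_n(j)$ are not in general powers of two, $k^{\ast}_n$ cannot split the mass exactly at $\tfrac12$, so one must verify that the EJS/posterior-matching drift bound of~\cite{7031961} is robust to this imbalance — i.e., that the suboptimal split still yields at least $C_{\text{BAWGN}}(\tfrac12, \cdot)$ rather than something strictly smaller. I would address this by appealing to the same convexity/Jensen argument used in step $(a)$ of Lemma~\ref{lemm:stage_1_time}'s proof together with the monotonicity structure already present in~\cite{7031961}; alternatively, since the lemma is stated as an immediate consequence of that proposition, it may suffice to note that the cited result is stated precisely for the sort PM query rule with arbitrary (non-dyadic) priors. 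A secondary technical point worth checking is the $\pm 1$ off-by-one in the "worst noise intensity is $\mathcal{N}(0,\tfrac{\alpha B\sigma^2}{2})$" claim, which I would absorb either into the rounding convention $\lfloor \cdot \rfloor$ already used in Stage~I or into the small constant $\eta$ that appears downstream in Lemma~\ref{lemma:achv}.
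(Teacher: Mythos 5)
Your proposal is correct and follows essentially the same route as the paper: the paper's proof consists precisely of recording the symmetry identity $C_{\text{BAWGN}}(\tfrac{1}{2}, v) = D(\mathcal{N}(k,v)\,\|\,\tfrac{1}{2}\mathcal{N}(0,v)+\tfrac{1}{2}\mathcal{N}(1,v))$ for $k\in\{0,1\}$ and then invoking Proposition~3 of~\cite{7031961} for the sorted posterior matching drift, with the worst-case noise variance $\tfrac{\alpha B\sigma^2}{2}$ handled exactly as you describe via monotonicity of the capacity in the noise variance. Your additional remarks on the non-dyadic split and the off-by-one in $k^{\ast}_n$ are reasonable elaborations of details the paper delegates entirely to the cited proposition.
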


\begin{lemma}
\label{lemm:sortPM_tau}
Under the sortPM search strategy, the following holds true for the expected number of queries while searching over the search width $\alpha B$ among $\frac{\alpha B}{\delta}$ locations with $\P_e \leq \frac{\epsilon}{2}$
\begin{align}
\expe_{\mathfrak{c}_{\epsilon}^2}[\tau^{2}]
\leq 
\frac{\log \frac{\alpha B}{\delta} + \log \frac{2}{\epsilon} + \log \log \frac{\alpha B}{\delta} + a_{\eta}}{C_{\text{BAWGN}}\left(\frac{1}{2}, \frac{\alpha B \sigma^2}{2} \right) - 
\eta},
\end{align}
where $a_{\eta}$ is the solution of the following equation
\begin{align}
\eta =\frac{a}{a-3}\psi_{\frac{B}{\delta}}(a-3).
\end{align}
\end{lemma}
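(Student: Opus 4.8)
\textbf{Proof plan for Lemma~\ref{lemm:sortPM_tau}.}
The plan is to convert the one‑step drift estimate of Lemma~\ref{lemm:EJSGaussian} into a bound on $\expe_{\mathfrak c_\epsilon^2}[\tau^2]$ via a Lyapunov / optional‑stopping argument for the potential $U(\mbs\rho_n'')=\sum_{i=1}^{\alpha B/\delta}\mbs\rho_n''(i)\log\frac{\mbs\rho_n''(i)}{1-\mbs\rho_n''(i)}$. Write $M=\frac{\alpha B}{\delta}$ and $C=C_{\text{BAWGN}}(\tfrac12,\tfrac{\alpha B\sigma^2}{2})$, so that Lemma~\ref{lemm:EJSGaussian} reads $\expe[U(\mbs\rho_{n+1}'')-U(\mbs\rho_n'')\mid\mathcal F_n,\mbf S_n'']\ge C$. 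The three pieces of the numerator will come from: (i) the initial gap, since $\mbs\rho_0''$ is uniform and hence $U(\mbs\rho_0'')=-\log(M-1)$, so $U$ must climb by about $\log M$; (ii) the level reached at the stopping time, which is $\log\frac2\epsilon$ up to an overshoot; and (iii) the lower‑order corrections $\log\log M$ and $a_\eta$, coming respectively from the boundary layer where $\max_i\mbs\rho_n''(i)\uparrow 1-\tfrac\epsilon2$ and from the heavy‑tailed single‑step log‑likelihood ratios.

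First I would tame the heavy tails. Because the observations are Gaussian, the increment $U(\mbs\rho_{n+1}'')-U(\mbs\rho_n'')$ is unbounded above — it is driven by $\frac{2Y_n-1}{2|\mbf S_n''|\delta\sigma^2}$ — so optional stopping cannot be applied to $U_n-nC$ directly. Instead I pass to the truncated potential $\hat U_n$, obtained by replacing each increment $U_k-U_{k-1}$ with $\min\{U_k-U_{k-1},\,a_\eta\}$. Then $\hat U_n\le U_n$, the increments of $\hat U_n$ are bounded above by $a_\eta$, and since $U_{k}-U_{k-1}-\hat U_k+\hat U_{k-1}=(U_k-U_{k-1}-a_\eta)_+$, whose conditional expectation — by the explicit form of the Bayesian update in~\eqref{eq:rho_update2} and the definition of $\psi_{B/\delta}$ — is at most $\frac{a_\eta}{a_\eta-3}\psi_{B/\delta}(a_\eta-3)=\eta$, I get $\expe[\hat U_{n+1}-\hat U_n\mid\mathcal F_n]\ge C-\eta$. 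Thus $\hat U_n-n(C-\eta)$ is a submartingale, and the optional stopping theorem at the bounded time $\tau^2\wedge m$ gives $(C-\eta)\,\expe[\tau^2\wedge m]\le \expe[\hat U_{\tau^2\wedge m}]-\hat U_0=\expe[\hat U_{\tau^2\wedge m}]+\log(M-1)$, with $m\to\infty$ postponed to the end (monotone convergence, which also yields $\tau^2<\infty$ a.s. for free once the bound is uniform in $m$).

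It remains to bound $\expe[\hat U_{\tau^2\wedge m}]$. On $\{\tau^2>m\}$ every coordinate of $\mbs\rho_m''$ is below $1-\tfrac\epsilon2$; since $x\mapsto x\log\frac{x}{1-x}$ is convex, $U$ on this capped simplex is maximized at a vertex, so $\hat U_m\le U_m\le(1-\epsilon)\log\frac{1-\epsilon/2}{\epsilon/2}\le\log\frac2\epsilon$. On $\{\tau^2\le m\}$ the same vertex bound gives $U(\mbs\rho_{\tau^2-1}'')\le\log\frac2\epsilon$, and truncation caps the last increment, so $\hat U_{\tau^2}\le\log\frac2\epsilon+a_\eta$. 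A more careful treatment of the final stretch — where $\max_i\mbs\rho_n''(i)$ passes through the range $[1-1/M,\,1-\epsilon/2]$ and the governing likelihood ratios are controlled only through their $(1+\gamma)$‑th moment $\xi_{B/\delta}$ — adds the $\log\log M$ term, exactly as in the heavy‑tailed stopping‑time analysis underlying Proposition~3 of~\cite{7031961}. Combining, $(C-\eta)\,\expe[\tau^2\wedge m]\le\log M+\log\frac2\epsilon+\log\log M+a_\eta$ uniformly in $m$, and letting $m\to\infty$ gives the claim. I expect the delicate step to be precisely this heavy‑tailed accounting: extracting the \emph{effective} drift $C-\eta$ and the lower‑order slack $\log\log M+a_\eta$ from increments that possess only a $(1+\gamma)$‑th moment, rather than bounded or exponential increments, is exactly what forces the auxiliary functional $\psi_{B/\delta}$ and the defining equation for $a_\eta$.
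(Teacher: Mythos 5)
Your proposal is correct and follows essentially the same route as the paper: truncate the heavy-tailed increments of $U$ at $a_\eta$ so that the per-step truncation loss is at most $\frac{a_\eta}{a_\eta-3}\psi_{\frac{B}{\delta}}(a_\eta-3)=\eta$, turn the truncated potential minus $n\left(C_{\text{BAWGN}}\left(\frac12,\frac{\alpha B\sigma^2}{2}\right)-\eta\right)$ into a submartingale via Lemma~\ref{lemm:EJSGaussian}, and apply optional stopping with the numerator split into the initial gap $\log\left(\frac{\alpha B}{\delta}-1\right)$, the threshold level of order $\log\frac2\epsilon$, and an overshoot of at most $a_\eta$. The one discrepancy is your attribution of the $\log\log\frac{\alpha B}{\delta}$ term to a ``final stretch'' heavy-tail correction: in the paper it comes solely from anchoring the auxiliary process $U'$ at the inflated threshold $\tilde\rho'=1-\frac{1}{1+\max\{\log M,\,2/\epsilon\}}$, and your direct accounting (stopping at $\tau^2$ itself) actually yields the bound without that term, which is a fortiori sufficient for the lemma.
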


\begin{proof}
Let $M = \frac{\alpha B}{\delta}$. Let $\tilde{\rho}^{\prime} = 1 - \frac{1}{1+\max\{\log M, \frac{2}{\epsilon}\}}$. Now, define $U^{\prime}(\mbs{\rho}^{\prime \prime}_0) = U(\mbs{\rho}^{\prime \prime}_0)- \log \frac{\tilde{\rho}^{\prime}}{1-\tilde{\rho}^{\prime}}$ and define $U^{\prime}(\mbs{\rho}^{\prime \prime}_n)$ as follows: if $U^{\prime}(\mbs{\rho}^{\prime \prime}_n) < 0$, then
\begin{align}
\label{eq:UPrimeCase1}
U^{\prime}(\mbs{\rho}^{\prime \prime}_{n+1})
= 
\left\{
\begin{array}{ll}
U(\mbs{\rho}^{\prime \prime}_{n+1}) - U(\mbs{\rho}^{\prime \prime}_{n}) + U^{\prime}(\mbs{\rho}^{\prime \prime}_n) & \text{if $U(\mbs{\rho}^{\prime \prime}_{n+1}) - U(\mbs{\rho}^{\prime \prime}_{n}) < a - U^{\prime}(\mbs{\rho}^{\prime \prime}_n)$,} \\
a & \text{if $U(\mbs{\rho}^{\prime \prime}_{n+1}) - U(\mbs{\rho}^{\prime \prime}_{n}) \geq a - U^{\prime}(\mbs{\rho}^{\prime \prime}_n)$,}
\end{array}
\right.
\end{align}
and if $U^{\prime}(\mbs{\rho}^{\prime \prime}_n) \geq 0$, then 
\begin{align}
\label{eq:UPrimeCase2}
U^{\prime}(\mbs{\rho}^{\prime \prime}_{n+1})
= 
\left\{
\begin{array}{ll}
U(\mbs{\rho}^{\prime \prime}_{n+1}) - U(\mbs{\rho}^{\prime \prime}_{n}) + U^{\prime}(\mbs{\rho}^{\prime \prime}_n) &
 \text{if $U(\mbs{\rho}^{\prime \prime}_{n+1}) - U(\mbs{\rho}^{\prime \prime}_{n}) < a$,} \\
a + U^{\prime}(\mbs{\rho}^{\prime \prime}_n)
& \text{if $U(\mbs{\rho}^{\prime \prime}_{n+1}) - U(\mbs{\rho}^{\prime \prime}_{n}) \geq a $.}
\end{array}
\right.
\end{align}
By induction we can show that
\begin{align}
\label{eq:UPrimeAndU}
\log \frac{\tilde{\rho}^{\prime}}{1-\tilde{\rho}^{\prime}}
\leq U(\mbs{\rho}^{\prime \prime}_n) - U^{\prime}(\mbs{\rho}^{\prime \prime}_n).
\end{align}
We have
\begin{align}
&\expe\left[U^{\prime}(\mbs{\rho}^{\prime \prime}_{n+1}) - U^{\prime}(\mbs{\rho}^{\prime \prime}_n)| \mathcal{F}_n \right]
\nonumber
\\
&= 
\expe\left[U(\mbs{\rho}^{\prime \prime}_{n+1}) - U(\mbs{\rho}^{\prime \prime}_n)| \mathcal{F}_n \right]
+
\expe\left[ \left[-b - U(\mbs{\rho}^{\prime \prime}_{n+1}) + U(\mbs{\rho}^{\prime \prime}_n)  -U^{\prime}(\mbs{\rho}^{\prime \prime}_n) \mbf{1}_{\{U^{\prime}(\mbs{\rho}^{\prime \prime}_n) < 0 \}} \right]^{+}| \mathcal{F}_n\right]
\nonumber
\\
& \overset{(a)}\geq \expe\left[U(\mbs{\rho}^{\prime \prime}_{n+1}) - U(\mbs{\rho}^{\prime \prime}_n)| \mathcal{F}_n \right] - 
\frac{a}{a-3}\psi_{\frac{B}{\delta}}(a-3)
\nonumber
\\
& \overset{(b)}\geq C_{\text{BAWGN}}\left(\frac{1}{2}, \frac{\alpha B \sigma^2}{2} \right) - 
\frac{a}{a-3}\psi_{\frac{B}{\delta}}(a-3),
\end{align}
where $(a)$ follows from MN thesis eq (4.140) and $(b)$ follows Lemma~\ref{lemm:EJSGaussian}. Let $\tau^{\prime} = \min\{n: U^{\prime}(\mbs{\rho}^{\prime \prime}_n) \geq 0\}$ and $\tau_{\frac{\epsilon}{\epsilon}} = \min\{n: U(\mbs{\rho}^{\prime \prime}_n) \geq \log \frac{\tilde{\rho}}{1-\tilde{\rho}}\}$ where $\tilde{\rho} = 1- \frac{2}{\epsilon}$. From equation~\eqref{eq:UPrimeAndU} and since $\tilde{\rho}^{\prime} > \tilde{\rho}$, we have
\begin{align}
\label{eq:stopping_time_comp}
\expe_{\mathfrak{c}_{\epsilon}^2}[\tau_{\frac{\epsilon}{2}}] \leq \expe_{\mathfrak{c}_{\epsilon}^2}|\tilde{\tau}^{\prime}].
\end{align}
The sequence $\frac{U^{\prime}(\mbs{\rho}^{\prime \prime}_n)}{C_{\text{BAWGN}}\left(\frac{1}{2}, \frac{\alpha B \sigma^2}{2} \right) - 
\frac{a}{a-3}\psi_{\frac{B}{\delta}}(a-3)} - n$ forms a submartingale with respect to filtration $\mathcal{F}_n$. Now by Doob's Stopping Theorem we have
\begin{align}
\frac{U^{\prime}(\mbs{\rho}^{\prime \prime}_{0})}{C_{\text{BAWGN}}\left(\frac{1}{2}, \frac{\alpha B \sigma^2}{2} \right) - 
\frac{a}{a-3}\psi_{\frac{B}{\delta}}(a-3)}
\leq
\expe \left[
\frac{U^{\prime}(\mbs{\rho}^{\prime \prime}_{\tilde{\tau}^{\prime}})}{C_{\text{BAWGN}}\left(\frac{1}{2}, \frac{\alpha B \sigma^2}{2} \right) - 
\frac{a}{a-3}\psi_{\frac{B}{\delta}}(a-3)}
- \tilde{\tau}^{\prime}
\right].
\end{align}
Hence, we have
\begin{align}
\label{eq:doon_ineq}
\expe_{\mathfrak{c}_{\epsilon}^2}[\tilde{\tau}^{\prime}] 
&\leq 
\frac{-U^{\prime}(\mbs{\rho}^{\prime \prime}_0) + \expe[U^{\prime}(\mbs{\rho}^{\prime \prime}_{\tilde{\tau}^{\prime}})]}{C_{\text{BAWGN}}\left(\frac{1}{2}, \frac{\alpha B \sigma^2}{2} \right) - 
\frac{a}{a-3}\psi_{\frac{B}{\delta}}(a-3)}
\nonumber
\\
&
= \frac{-U(\mbs{\rho}^{\prime \prime}_0) + \log \frac{\tilde{\rho}^{\prime}}{1-\tilde{\rho}^{\prime}} + \expe[U^{\prime}(\mbs{\rho}^{\prime \prime}_{\tilde{\tau}^{\prime}})]}{C_{\text{BAWGN}}\left(\frac{1}{2}, \frac{\alpha B \sigma^2}{2} \right) - 
\frac{a}{a-3}\psi_{\frac{B}{\delta}}(a-3)}
\nonumber
\\
& \overset{(a)}\leq 
\frac{\log \frac{\alpha B}{\delta} +\log \log \frac{\alpha B}{\delta} + \log \frac{2}{\epsilon} + \expe[U^{\prime}(\mbs{\rho}^{\prime \prime}_{\tilde{\tau}^{\prime}})]}{C_{\text{BAWGN}}\left(\frac{1}{2}, \frac{\alpha B \sigma^2}{2} \right) - 
\frac{a}{a-3}\psi_{\frac{B}{\delta}}(a-3)}
\nonumber
\\
& \overset{(b)}\leq 
\frac{\log \frac{\alpha B}{\delta} + \log \log \frac{\alpha B}{\delta} + \log \frac{2}{\epsilon} + a}{C_{\text{BAWGN}}\left(\frac{1}{2}, \frac{\alpha B \sigma^2}{2} \right) - 
\frac{a}{a-3}\psi_{\frac{B}{\delta}}(a-3)},
\end{align}
where $(a)$ follows from the fact that $U(\mbs{\rho}^{\prime \prime}_0) = -\log (\frac{B}{\delta} - 1)$ and $(b)$ follows from the fact that for all $n < \tau^{\prime}$, $U^{\prime}(\mbs{\rho}^{\prime \prime}_n) < 0$ and hence from equation~\eqref{eq:UPrimeCase1} we have $U^{\prime}(\mbs{\rho}^{\prime \prime}_{\tilde{\tau}^{\prime}}) < a$. Let $\eta> 0$ such that $\eta 
\ll C_{\text{BAWGN}}\left(\frac{1}{2}, \frac{\alpha B \sigma^2}{2} \right)$. Choose $a_{\eta}$ such that 
\begin{align}
\eta =\frac{a}{a-3}\psi_{\frac{B}{\delta}}(a-3).
\end{align}
We have the assertion of the lemma by combining above equation with equations~\eqref{eq:stopping_time_comp} and~\eqref{eq:doon_ineq}.
\end{proof}

\subsection{Proof of Corollary~2}
Choose $a_{\eta} = \log \log \frac{B}{\delta}$ so that $\eta$ goes to zero as $\frac{B}{\delta} \to \infty$, and choose $\alpha(\frac{B}{\delta}) = \frac{1}{\log \frac{B}{\delta}}$. Note that $\alpha (\frac{B}{\delta})$ goes to $0$ slower than $\delta$ goes to $0$. Combining this with Theorem~\ref{thm:gain_lower_bound} and using the fact $\lim_{\delta \to 0}C_{\text{BAWGN}}\left( \frac{1}{2}, \frac{1}{2} \alpha\left( \frac{B}{\delta}\right) B \sigma^2 \right) = 1$, we have equation~(\ref{eq:delta_gain}). Similarly, note that $\alpha (\frac{B}{\delta})$  goes to $0$ slower than $B$ goes to $\infty$. Using loose approximations $C_{\text{BAWGN}}(q^{\ast}, q^{\ast}B\sigma^2)\leq \frac{\log e}{B \sigma^2}$ and $C_{\text{BAWGN}}\left(\frac{1}{2}, \alpha(\frac{B}{\delta}) B \sigma^2 \right) \geq \frac{\log(\frac{B}{\delta})}{16B \sigma^2} \left( 1 - \frac{\log(\frac{B}{\delta})}{16B \sigma^2} \right)$ with Theorem~\ref{thm:gain_lower_bound} we have equations~(\ref{eq:B_NA}--\ref{eq:B_gain}).

%
%
%
%
%
%
%

\color{black}
\bibliographystyle{IEEEtran}
\bibliography{HypTest}

\begin{thebibliography}{10}
\providecommand{\url}[1]{#1}
\csname url@samestyle\endcsname
\providecommand{\newblock}{\relax}
\providecommand{\bibinfo}[2]{#2}
\providecommand{\BIBentrySTDinterwordspacing}{\spaceskip=0pt\relax}
\providecommand{\BIBentryALTinterwordstretchfactor}{4}
\providecommand{\BIBentryALTinterwordspacing}{\spaceskip=\fontdimen2\font plus
\BIBentryALTinterwordstretchfactor\fontdimen3\font minus
  \fontdimen4\font\relax}
\providecommand{\BIBforeignlanguage}[2]{{%
\expandafter\ifx\csname l@#1\endcsname\relax
\typeout{** WARNING: IEEEtran.bst: No hyphenation pattern has been}%
\typeout{** loaded for the language `#1'. Using the pattern for}%
\typeout{** the default language instead.}%
\else
\language=\csname l@#1\endcsname
\fi
#2}}
\providecommand{\BIBdecl}{\relax}
\BIBdecl

\bibitem{Gallager}
R.~G. Gallager, \emph{Information Theory and Reliable Communication}.\hskip 1em
  plus 0.5em minus 0.4em\relax John Wiley \& Sons, Inc., 1968.

\bibitem{SungEnChiu}
\BIBentryALTinterwordspacing
S.~Chiu and T.~Javidi, ``Sequential measurement-dependent noisy search,'' in
  \emph{IEEE Information Theory Workshop (ITW)}, Sept. 2016. [Online].
  Available: \url{http://dx.doi.org/10.1109/ITW.2016.7606828}
\BIBentrySTDinterwordspacing

\bibitem{DBLP:journals/corr/KaspiSJ16}
\BIBentryALTinterwordspacing
Y.~Kaspi, O.~Shayevitz, and T.~Javidi, ``Searching with measurement dependent
  noise,'' \emph{CoRR}, vol. abs/1611.08959, 2016. [Online]. Available:
  \url{https://arxiv.org/abs/1611.08959}
\BIBentrySTDinterwordspacing

\bibitem{8007098}
\BIBentryALTinterwordspacing
A.~Lalitha, N.~Ronquillo, and T.~Javidi, ``Measurement dependent noisy search:
  The gaussian case,'' in \emph{Proceedings of the 2017 IEEE International
  Symposium on Information Theory (ISIT)}, June 2017, pp. 3090--3094. [Online].
  Available: \url{http://dx.doi.org/10.1109/ISIT.2017.8007098}
\BIBentrySTDinterwordspacing

\bibitem{7460513}
\BIBentryALTinterwordspacing
M.~Giordani, M.~Mezzavilla, C.~N. Barati, S.~Rangan, and M.~Zorzi,
  ``Comparative analysis of initial access techniques in 5g mmwave cellular
  networks,'' in \emph{Proceedings of the 2016 Annual Conference on Information
  Science and Systems (CISS)}, March 2016, pp. 268--273. [Online]. Available:
  \url{http://dx.doi.org/10.1109/CISS.2016.7460513}
\BIBentrySTDinterwordspacing

\bibitem{7421136}
\BIBentryALTinterwordspacing
C.~N. Barati, S.~A. Hosseini, M.~Mezzavilla, P.~Amiri-Eliasi, S.~Rangan,
  T.~Korakis, S.~S. Panwar, and M.~Zorzi, ``Directional initial access for
  millimeter wave cellular systems,'' in \emph{Proceedings of the 49th Asilomar
  Conference on Signals, Systems and Computers}, Nov 2015, pp. 307--311.
  [Online]. Available: \url{http://dx.doi.org/10.1109/ACSSC.2015.7421136}
\BIBentrySTDinterwordspacing

\bibitem{Abari_AgileLink}
\BIBentryALTinterwordspacing
O.~Abari, H.~Hassanieh, and D.~Katabi, ``Millimeter wave communications: From
  point-to-point links to agile network connections,'' in \emph{Proceedings of
  the 15th ACM Workshop on Hot Topics in Networks}, November 2016, pp.
  169--175. [Online]. Available:
  \url{http://dx.doi.org/10.1145/3005745.3005766}
\BIBentrySTDinterwordspacing

\bibitem{42_35Multibandjoint}
\BIBentryALTinterwordspacing
Z.~Quan, S.Cui, A.~H. Sayed, and H.~V. Poor, ``{Optimal Multiband Joint
  Detection for Spectrum Sensing in Cognitive Radio Networks},'' \emph{IEEE
  Transactions Signal Process}, vol.~57, no.~3, pp. 1128--1140, March 2009.
  [Online]. Available: \url{http://dx.doi.org/10.1109/TSP.2008.2008540}
\BIBentrySTDinterwordspacing

\bibitem{50AdaptiveMultiband}
\BIBentryALTinterwordspacing
Y.~Feng and X.~Wang, ``{Adaptive Multiband Spectrum Sensing},'' \emph{IEEE
  Wireless Communications Letters}, vol.~1, no.~2, pp. 121--124, April 2012.
  [Online]. Available: \url{http://dx.doi.org/10.1109/WCL.2012.022012.110230}
\BIBentrySTDinterwordspacing

\bibitem{55AdaptiveAgileCR}
\BIBentryALTinterwordspacing
A.~Tajer, R.~Castro, and X.~Wang, ``{Adaptive Spectrum Sensing for Agile
  Cognitive Radios},'' in \emph{Proceedings of the 2010 IEEE Conference on
  Acoustics Speech and Signal Processing (ICASSP)}, March 2010. [Online].
  Available: \url{http://dx.doi.org/10.1109/ICASSP.2010.5496137}
\BIBentrySTDinterwordspacing

\bibitem{Nowak_CompSensing}
\BIBentryALTinterwordspacing
M.~L. Malloy and R.~D. Nowak, ``Near-optimal adaptive compressed sensing,''
  \emph{IEEE Transactions on Information Theory}, vol.~60, no.~7, pp.
  4001--4012, May 2014. [Online]. Available:
  \url{http://dx.doi.org/10.1109/TIT.2014.2321552}
\BIBentrySTDinterwordspacing

\bibitem{Y_Kim_MACSensing}
\BIBentryALTinterwordspacing
Y.~Jin, Y.~Kim, and B.~Rao, ``Limits on support recovery of sparse signals via
  multiple-access communication techniques,'' \emph{IEEE Transactions on
  Information Theory}, vol.~57, no.~12, pp. 7877--7892, December 2011.
  [Online]. Available: \url{http://dx.doi.org/10.1109/TIT.2011.2170116}
\BIBentrySTDinterwordspacing

\bibitem{Treichler_NoiseFolding}
\BIBentryALTinterwordspacing
J.~Treichler, M.~Davenport, and R.~Baraniuk, ``Application of compressive
  sensing to the design of wideband signal acquisition receivers,'' in
  \emph{Proceedings of the 6th US/Australia Joint Work. Defense Applications of
  Signal Processing (DASP)}, September 2009, pp. 1--10. [Online]. Available:
  \url{http://mdav.ece.gatech.edu/publications/tdb-dasp-2009.pdf}
\BIBentrySTDinterwordspacing

\bibitem{Sharma_Murthy}
\BIBentryALTinterwordspacing
A.~Sharma and C.~Murthy, ``Group testing-based spectrum hole search for
  cognitive radios,'' \emph{IEEE Transactions on Vehicular Technology},
  vol.~63, no.~8, pp. 3794--3805, October 2014. [Online]. Available:
  \url{http://dx.doi.org/10.1109/TVT.2014.2305978}
\BIBentrySTDinterwordspacing

\bibitem{CoverBook2nd}
T.~M. Cover and J.~A. Thomas, \emph{Elements of information theory}.\hskip 1em
  plus 0.5em minus 0.4em\relax John Wiley \& Sons, Inc., 2006.

\bibitem{nancy_asilomar}
\BIBentryALTinterwordspacing
N.~Ronquillo and T.~Javidi, ``{Multi-band Noisy Spectrum Sensing with
  Codebooks},'' in \emph{Proceedings of the 50th Asilomar Conference on
  Signals, Systems, and Computers}, November 2016, pp. 1687--1691. [Online].
  Available: \url{http://dx.doi.org/10.1109/ACSSC.2016.7869669}
\BIBentrySTDinterwordspacing

\bibitem{Bshouty_kTargetsweighing}
\BIBentryALTinterwordspacing
N.~Bshouty, ``Optimal algorithms for the coin weighing problem with a spring
  scale,'' in \emph{Proceedings of the 22nd Conference on Learning Theory
  (COLT)}, June 2009, pp. 1--10. [Online]. Available:
  \url{http://www.cs.mcgill.ca/~colt2009/papers/004.pdf}
\BIBentrySTDinterwordspacing

\bibitem{Chang_kTargetsCode}
\BIBentryALTinterwordspacing
S.~C. Chang and E.~Weldon, ``Coding for t-user multiple-access channels,''
  \emph{IEEE Transactions on Information Theory}, vol.~25, no.~6, pp. 684--691,
  November 1979. [Online]. Available:
  \url{http://dx.doi.org/10.1109/TIT.1979.1056109}
\BIBentrySTDinterwordspacing

\bibitem{6400990}
M.~Naghshvar and T.~Javidi, ``{Optimal reliability over a DMC with feedback via
  deterministic sequential coding},'' in \emph{Proceedings of the 2012 IEEE
  International Symposium on Information Theory and its Applications (ISITA)},
  Oct 2012, pp. 51--55.

\bibitem{7031961}
\BIBentryALTinterwordspacing
------, ``{Extrinsic Jensen--Shannon Divergence: Applications to
  Variable-Length Coding},'' in \emph{IEEE Transactions on Information Theory},
  July 2012, pp. 2191--2195. [Online]. Available:
  \url{http://dx.doi.org/10.1109/ISIT.2012.6283840}
\BIBentrySTDinterwordspacing

\end{thebibliography}


\end{document}